\documentclass[12pt]{article}
\usepackage[top=1in,bottom=1in,left=1in,right=1in]{geometry} 
\usepackage{palatino}

\usepackage[T1]{fontenc}
\usepackage{lmodern}
\usepackage[american]{babel}
\usepackage{tabularx}
\usepackage{booktabs}
\usepackage[table]{xcolor}

\usepackage{enumerate}
\usepackage{verbatim}
\usepackage{natbib}
\usepackage{amsmath}
\usepackage{subfigure}
\usepackage{enumerate}
\usepackage{subfigure}
\usepackage{verbatim}
\usepackage{natbib}
\usepackage{multirow}
\usepackage{graphicx}
\usepackage{amssymb,amsmath}
\usepackage{color}
\usepackage{url}

\normalsize

\def\be{\begin{eqnarray}}
\def\ee{\end{eqnarray}}
\usepackage{epsfig}

\newtheorem{thm}{Theorem}
\newtheorem{lem}{Lemma}

\title{A Likelihood Ratio Testing Approach for Interval-Censored Data}

\author{Yuan Wu$^\ast$, and Susan Halabi$^\dagger$}

\date{}

\begin{document}

\maketitle

\begin{center} 
Department of Biostatistics and Bioinformatics, Duke University,\\ Durham, North Carolina, USA\\[2pt] $^\ast$\texttt{yuan.wu@duke.edu}, $^\dagger$\texttt{susan.halabi@duke.edu} 
\end{center}

\begin{abstract}
Interval-censored data frequently arise in clinical research where event times are only known to fall within specific assessment windows. Although the Cox proportional hazards model is a standard approach for such data, existing Wald-type tests often suffer from instability or poor performance in small samples. In this paper, we propose a robust spline-sieve-based likelihood ratio test for interval-censored data. We develop a computationally efficient estimation framework that ensures numerical stability. Furthermore, we rigorously establish the asymptotic distribution of the proposed likelihood ratio statistic, providing a solid theoretical foundation for statistical inference. Extensive simulation studies demonstrate that our approach achieves superior error control and higher power compared with traditional approaches. The practical utility of the method is further illustrated through the analysis of a real-world clinical dataset.

{\sc Key Words}: Interval-censoring; Likelihood ratio test; Spline-sieve estimation
\end{abstract}

\section{Introduction}
\label{sec1}
Accurate estimation of time-to-event outcomes is critical in oncology clinical trials, as these endpoints guide treatment decisions and ensure patients receive therapy at the most appropriate time.
In many phase II and III cancer trials, the primary endpoint is a time-to-event outcome such as progression-free survival (PFS), typically assessed every 6 months until disease progression~\cite[]{armstrong:2020,stadler:2006}.
Because the exact timing of progression is only known to lie between two assessments, such data are subject to interval-censoring. 

A common but problematic practice in clinical research is to simplify interval-censored data by converting them into right-censored data. Investigators then apply standard methods such as the Kaplan–Meier estimator~\cite[]{kaplan:meier:1958}, the log-rank test~\cite[]{mantel:1966}, or the Cox proportional hazards (PH) model~\cite[]{cox:1972}.  Specifically, when an event is known to occur only within an interval, investigators often assign a single time point (typically the right end of the interval, that is, the last observation time) as the event time, thereby converting the data into right-censored form. Although convenient, this approach can introduce bias~\cite[]{Turnbull:76}. Moreover, \cite{wu:halabi:2019} showed that, even when a random point within the interval is selected as the event time, the resulting estimate generally fails to attain the maximum likelihood estimate obtained under proper methods for interval-censoring. Fully parametric modeling approaches~\cite[]{hudgens:li:fine} that rely on distributional assumptions provide another straightforward option, but their validity hinges on correctly specifying the functional form of the target event-to time distribution. 

General methodologies for interval-censored data, encompassing nonparametric estimation, semiparametric regression, and hypothesis testing, are extensively reviewed by \cite{huang:wellner:1997} and \cite{sun:06}. Within this framework, the Cox proportional hazards model remains a cornerstone of clinical research. However, interval censoring precludes the use of the standard partial likelihood \cite{cox:1972}, necessitating full likelihood approaches. While nonparametric maximum likelihood estimation (MLE) \cite{zeng:mao:lin:16, zeng:gao:lin:17} and spline-sieve MLE \cite{zhang:Hua:Huang, wang:hudgens:2016} have been proposed for this context, these methods rely on the EM algorithm or constrained optimization, which can be challenging  to implement in practice. Specifically, these algorithms often suffer from computational instability and slow convergence, with performance issues becoming particularly pronounced  in the small-sample settings typical of phase II oncology trials.

In this paper, we develop a sieve semiparametric likelihood ratio method for inference under the Cox proportional hazards model with interval-censored data. Our approach offers two primary advantages. First, by leveraging the likelihood ratio principle, which is often more reliable than Wald-type tests in small samples~\cite[]{meeker:1995, murphy:vandervaart:1997}, we improve the robustness of the statistical inference. Second, we propose an efficient and stable computational strategy for estimating the sieve MLE, thereby addressing the practical limitations of existing  algorithms. Together, these innovations provide a practical and accurate approach for analyzing interval-censored survival data, ultimately strengthening the evidence base for clinical decision-making. In addition,  we establish the theoretical foundation for the proposed approach. Although the likelihood ratio statistic arises directly from the semiparametric MLE, deriving its asymptotic distribution under interval-censoring is a non-trivial task~\cite[]{murphy:vandervaart:1997}.

This paper is organized as follows. Section~\ref{sec2} introduces the construction of sieve semiparametric likelihood ratio statistics, based on sieve MLEs obtained from the full likelihood function and from likelihoods where subsets of the regression coefficients are treated as unknown. An efficient and robust computational method for obtaining the sieve MLE is also proposed. Section~\ref{sec3} establishes the asymptotic $\chi^2$ distribution of the likelihood ratio statistic, building on the convergence and normality of the sieve MLE. Section~\ref{sec4} presents simulation studies demonstrating that the proposed likelihood ratio test  outperforms existing Wald-type tests for the Cox PH model with interval-censoring. Section~\ref{sec5} applies the method to a real dataset for illustration. Section~\ref{sec6} discusses the practical value of the proposed approach and outlines future research directions. Technical details are presented in the Supplementary Material.  R scripts implementing the proposed methods are available at {\color{blue}\url{https://github.com/project-unknown-cloud/LR-IntervalCensoring}}.

\section{Likelihood Ratio Statistics and Their Computing}
\label{sec2}

Assume that the event time $T$ is interval-censored by $U$ and $V$ with $\Delta_1=1_{[T\le U]}$, $\Delta_2=1_{[U<T\le V]}$ and $\Delta_3=1-\Delta_1-\Delta_2$, $\boldsymbol{Z}=(Z_1,\cdots,Z_K)^\top$ represents the covariates vector with $K$ components. Let the observed data $\{\boldsymbol{x}_i\}_{i=1}^n=\{(u_i,v_i,\delta_{1i}, \delta_{2i},\delta_{3i}, \boldsymbol{z}_i)\}_{i=1}^n$ are independent observations of  $(U,V,\Delta_{1}, \Delta_{2},\Delta_{3}, \boldsymbol{Z})$ with sample size $n$. In addition, for the focus of this paper as pointed out in Section~\ref{sec1}, we assume the survival function of $T$ conditional on covariates follows the Cox PH model, that is, $S(t|\boldsymbol{z})=\Pr(T>t|\boldsymbol{Z}=\boldsymbol{z})=e^{-\Lambda(t)e^{\boldsymbol{\beta}^\top \boldsymbol{z}}}$ for $\boldsymbol{\beta}=(\beta_1,\cdots,\beta_K)^\top$ with the baseline cumulative hazard function $\Lambda$ for $T$. Also assume that $T$ is independent of $(U,V)$ given $\boldsymbol{Z}$ and that $(U,V)$ is independent of $\boldsymbol{Z}$. The likelihood of $\{x_i\}_{i=1}^n$ can be derived and reduced to
\begin{equation}\label{sievelike}
L_f(\boldsymbol{\beta},\Lambda;)=\prod_{i=1}^n \left\{1-S(u_i|\boldsymbol{z}_i)\right\}^{\delta_{1i}}\left\{S(u_i|\boldsymbol{z}_i)-S(v_i|\boldsymbol{z}_i)\right\}^{\delta_{2i}}S(v_i|\boldsymbol{z}_i)^{\delta_{3i}}.
\end{equation}

When the nonparametric baseline survival function is smooth (which is a widely adopted assumption in survival analysis~\cite[]{kalbfleisch2002}), it is valid to adopt the sieve spline approach to study the interval-censored Cox PH model.  This paper is based on adopting the monotone I-splines~\cite[]{ramsay} to approximate the cumulative baseline hazard function $\Lambda$. The I-spline form for the cumulative baseline hazard function is given by
\begin{equation}\label{sievelambda}
\Lambda_n(t)=\sum_{j=1}^{p_n}\alpha_jI_j^l(t),
\end{equation}
with nonnegative spline coefficients $\alpha_j$s, where $I_j^l$s are the monotone I-spline basis functions of order $l$ and the number of I-spline basis functions $p_n=O\left(n^{\nu}\right)$ for $0<\nu<1$. Then $L_f(\boldsymbol{\beta},\Lambda_n;)$ becomes the I-spline based sieve likelihood, where $L_f$ is defined by (\ref{sievelike}).

Note that the sieve likelihood we just described is the same as the one adopted by~\cite{wang:hudgens:2016} but differs from that described by~\cite{zhang:Hua:Huang}.
Specifically, in~\cite{zhang:Hua:Huang} $\log \Lambda$ is approximated using B-splines to construct the spline-based likelihood. However, by the equivalence between the B-splines and the I-splines established by~\cite{wu:thesis:2010}, these two sieve approaches are similar in practice.

Assume that $\left(\hat{\boldsymbol{\beta}}_n,\hat{\Lambda}_n\right)$ is the sieve spline-based maximum likelihood estimator (MLE) of the true parameter and baseline cumulative hazard function $(\boldsymbol{\beta}_0,\Lambda_0)$ with $\boldsymbol{\beta}_0=(\beta_{01},\cdots,\beta_{0K})^\top$. Also assume that $\hat{\Lambda}_{n0}$ is the sieve MLE of $\Lambda_0$ when $\boldsymbol{\beta}_0$ is known. Then we construct the likelihood ratio statistic $2\log\frac{L_f\left(\hat{\boldsymbol{\beta}}_n,\hat{\Lambda}_n;\right)}{L_f\left(\boldsymbol{\beta}_0,\hat{\Lambda}_{n0};\right)}$ of the global test for $\boldsymbol{\beta}=\boldsymbol{\beta}_0$. For different local tests of $\boldsymbol{\beta}$, the likelihood ratio statistic can be constructed similarly. For example, regarding the test for $\beta_1=\beta_{01}$, assume $\left(\hat{\boldsymbol{\beta}}_{n,-1},\hat{\Lambda}_{n,-1}\right)$ is the sieve MLE of
$\left({\boldsymbol{\beta}}_{0,-1}, \Lambda_0\right)$ with
$\boldsymbol{\beta}_{0,-1}=(\beta_{02},\cdots,\beta_{0K})^\top$ when $\beta_{01}$ is known, then the corresponding likelihood ratio statistic is given by $2\log\frac{L_f\left(\hat{\boldsymbol{\beta}}_{n},\hat{\Lambda}_{n};\right)}{L_f\left\{\left(\beta_{01},\hat{\boldsymbol{\beta}}_{n,-1}^\top\right)^\top, \hat{\Lambda}_{n,-1};\right\}}$.

The proposed likelihood ratio approach relies on the sieve MLEs, which depend on the spline coefficients that maximize the likelihood given a set of I-spline basis functions. Therefore, before computing the sieve MLEs, it is necessary to construct the I-spline basis functions using a reasonable knot sequence determined by the observed data. Specifically, the number of the interior knots of the knot sequence is chosen to be the largest integer smaller than $n^{\nu}$ with $\nu=1/3$, and the interior knots are positioned at the quantiles of the union of the censoring times $\{(u_i,v_i)\}_{i=1}^n$.
In addition, cubic I-spline basis functions (order 4) are used for the computation based on our experience and the literature.
For further details, see~\cite{wu:thesis:2010}.

 To address the computation issues mentioned in Section~\ref{sec1} of the constrained projection method~\cite[]{zhang:Hua:Huang} and the EM algorithm~\cite[]{wang:hudgens:2016}, we propose to reparameterize the spline coefficients to convert the constrained maximization problem to an unconstrained one. Specifically, in the likelihood $L_f(\beta,\Lambda_n;)$, let $\Lambda_n(t)=\sum_{j=1}^{p_n}e^{\alpha_j}I_j^l(t)$. Under this new formulation, the nonnegative constraints enforced for the spline coefficients in (\ref{sievelambda}) are no longer necessary since the reparameterized coefficients $e^{\alpha_j}$s are always nonnegative.
For the resulting unconstrained maximization problem, the sieve MLE can be computed using the Newton method or any other unconstrained optimization algorithm.

\section{Asymptotic distribution of  likelihood ratio statistic}
\label{sec3}
In this section, we derive the asymptotic distribution of the proposed likelihood ratio statistic based on the semiparametric sieve spline-based MLE. The forthcoming Theorem~\ref{thm1} establishes that, under the null hypothesis $H_0: \boldsymbol{\beta} = \boldsymbol{\beta}_0$, the likelihood ratio statistic converges weakly to a $\chi^2$ distribution. This result provides the theoretical foundation for the global test of all regression coefficients. Parallel theory can also be developed for local tests; for example, testing the first regression coefficient under the null hypothesis $H_0: \beta_1 = \beta_{01}$.


\begin{thm}\label{thm1}
Let
\[
lr(\boldsymbol{\beta}_0)
= 2\log \frac{L_f\!\left(\hat{\boldsymbol{\beta}}_n,\hat{\Lambda}_n;\right)}
{L_f\!\left(\boldsymbol{\beta}_0,\hat{\Lambda}_{n0};\right)}.
\]
Under the null hypothesis $H_0:\boldsymbol{\beta}=\boldsymbol{\beta}_0$ and conditions C1--C4 in the Supplementary Document, as $n \to \infty$,
\[
lr(\boldsymbol{\beta}_0) \xrightarrow{d} \chi^2_K.
\]
\end{thm}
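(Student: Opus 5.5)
The plan follows the Murphy--van der Vaart profile likelihood approach, adapted to the spline sieve. Write $\ell_n(\boldsymbol{\beta},\Lambda)=n^{-1}\log L_f(\boldsymbol{\beta},\Lambda;)$, with $\ell(\boldsymbol{\beta},\Lambda;\boldsymbol{x})$ the single-observation log-likelihood. Define the sieve profile log-likelihood $pl_n(\boldsymbol{\beta})=\sup_{\Lambda\in\mathcal{F}_n}\ell_n(\boldsymbol{\beta},\Lambda)$, where $\mathcal{F}_n$ is the nonnegative I-spline sieve. Then
\[
lr(\boldsymbol{\beta}_0)=2n\{pl_n(\hat{\boldsymbol{\beta}}_n)-pl_n(\boldsymbol{\beta}_0)\}.
\]
The whole proof therefore reduces to a quadratic expansion of $pl_n$ around $\boldsymbol{\beta}_0$ with remainder $o_p(n^{-1})$ along $\sqrt n$-consistent sequences. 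Once we have
\[
n\,pl_n(\tilde{\boldsymbol{\beta}}_n)=n\,pl_n(\boldsymbol{\beta}_0)+n(\tilde{\boldsymbol{\beta}}_n-\boldsymbol{\beta}_0)^\top \mathbb{P}_n\tilde\ell_0-\tfrac{n}{2}(\tilde{\boldsymbol{\beta}}_n-\boldsymbol{\beta}_0)^\top I_0(\tilde{\boldsymbol{\beta}}_n-\boldsymbol{\beta}_0)+o_p\!\left(1+\sqrt n\|\tilde{\boldsymbol{\beta}}_n-\boldsymbol{\beta}_0\|\right)^2,
\]
the result follows. Here $\tilde\ell_0$ is the efficient score for $\boldsymbol{\beta}$ and $I_0=P\tilde\ell_0\tilde\ell_0^\top$ is nonsingular (condition C4-type). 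Applying the expansion at $\hat{\boldsymbol{\beta}}_n$, which satisfies $\sqrt n(\hat{\boldsymbol{\beta}}_n-\boldsymbol{\beta}_0)=I_0^{-1}\sqrt n\mathbb{P}_n\tilde\ell_0+o_p(1)$, and at $\boldsymbol{\beta}_0$ gives
\[
lr(\boldsymbol{\beta}_0)=\{\sqrt n\mathbb{P}_n\tilde\ell_0\}^\top I_0^{-1}\{\sqrt n\mathbb{P}_n\tilde\ell_0\}+o_p(1).
\]
Since $\sqrt n\mathbb{P}_n\tilde\ell_0\to N(0,I_0)$ by the CLT, this converges to $\chi^2_K$.

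The steps are as follows.

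\textbf{Step 1 (rates).} Using conditions C1--C4 (bounded covariates, $\Lambda_0$ smooth with $r$ derivatives and bounded away from $0$ and $\infty$ on the support of $(U,V)$, and separation $V-U\ge\eta$), establish consistency and the rate
\[
\|\hat{\boldsymbol{\beta}}_n-\boldsymbol{\beta}_0\|+\|\hat\Lambda_n-\Lambda_0\|_{L_2}=O_p\!\left(n^{-\min(r\nu,(1-\nu)/2)}\right).
\]
This uses the sieve rate theorem of Shen--Wong, with bracketing entropy of the I-spline class of order $p_n\log(1/\epsilon)$ and approximation error $O(p_n^{-r})$. The same rate holds for $\hat\Lambda_{n0}$, and more generally for $\hat\Lambda_{\tilde{\boldsymbol{\beta}}_n}$ along any $\tilde{\boldsymbol{\beta}}_n\to\boldsymbol{\beta}_0$. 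I will take $\nu$ so that the rate is faster than $n^{-1/4}$.

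\textbf{Step 2 (least favorable direction).} Compute the least favorable direction $\boldsymbol{h}^*$ by projecting the $\boldsymbol{\beta}$-score onto the closed span of the $\Lambda$-scores $\int h\,d\Lambda$. Show it has a smooth version, following Huang--Wellner and Zhang et al. Then construct approximate least favorable submodels inside the sieve,
\[
t\mapsto \Lambda_t(\boldsymbol{\beta},\Lambda)=\Lambda+(\boldsymbol{\beta}-t)^\top\boldsymbol{h}_n^*\circ\Lambda\,d\Lambda,
\]
where $\boldsymbol{h}_n^*$ is a spline approximation of $\boldsymbol{h}^*$. Nonnegativity holds locally because $\Lambda_0$ has a derivative bounded below, and $\Lambda_t$ passes through $\Lambda$ at $t=\boldsymbol{\beta}$.

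\textbf{Step 3 (sandwich).} Bound $pl_n(\tilde{\boldsymbol{\beta}})$ between $\ell_n\{\tilde{\boldsymbol{\beta}},\Lambda_{\tilde{\boldsymbol{\beta}}}(\boldsymbol{\beta}_0,\hat\Lambda_{n0})\}$ and $\ell_n(\tilde{\boldsymbol{\beta}},\hat\Lambda_{\tilde{\boldsymbol{\beta}}})$. Then Taylor-expand $t\mapsto\ell(t,\Lambda_t)$ to second order, exactly as in Murphy and van der Vaart (2000, Theorem 1). This requires three verifications:
\begin{enumerate}
\item[(a)] the submodel scores $\dot\ell(t,\boldsymbol{\beta},\Lambda)$ and their derivatives form Donsker/Glivenko--Cantelli classes (bounded likelihood terms under C1--C4);
\item[(b)] $\mathbb{G}_n\{\dot\ell(\boldsymbol{\beta}_0,\boldsymbol{\beta}_0,\hat\Lambda)-\tilde\ell_0\}=o_p(1)$ by $L_2$-consistency plus asymptotic equicontinuity;
\item[(c)] the no-bias condition $P\dot\ell(\boldsymbol{\beta}_0,\boldsymbol{\beta}_0,\hat\Lambda_{\tilde{\boldsymbol{\beta}}_n})=o_p(\|\tilde{\boldsymbol{\beta}}_n-\boldsymbol{\beta}_0\|+n^{-1/2})$.
\end{enumerate}

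\textbf{Main obstacle.} I expect condition (c) to be the hardest step. Because $\tilde\ell_0$ is orthogonal to the nuisance scores, the bias is second order, $O(\|\hat\Lambda-\Lambda_0\|_{L_2}^2)$. This is $o_p(n^{-1/2})$ only given the $n^{-1/4+}$ rate from Step 1. There are also two approximation terms, from replacing $\boldsymbol{h}^*$ by $\boldsymbol{h}_n^*$ and from $\hat\Lambda$ being a sieve rather than free element, each of size $O(p_n^{-r})\cdot O_p(\text{rate})$. Both must be $o(n^{-1/2})$, which constrains $\nu$ (roughly $1/(2r)<\nu<1/2$); I will state this constraint as part of the conditions.
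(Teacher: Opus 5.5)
There is a genuine gap at the point where you assert that the approximate least favorable submodels lie ``inside the sieve.'' Your Step~3 sandwich needs $\ell_n\{\tilde{\boldsymbol{\beta}},\Lambda_{\tilde{\boldsymbol{\beta}}}(\boldsymbol{\beta}_0,\hat\Lambda_{n0})\}\le pl_n(\tilde{\boldsymbol{\beta}})$. For the upper bound it needs $pl_n(\tilde{\boldsymbol{\beta}})-pl_n(\boldsymbol{\beta}_0)\le \ell_n(\tilde{\boldsymbol{\beta}},\hat\Lambda_{\tilde{\boldsymbol{\beta}}})-\ell_n\{\boldsymbol{\beta}_0,\Lambda_{\boldsymbol{\beta}_0}(\tilde{\boldsymbol{\beta}},\hat\Lambda_{\tilde{\boldsymbol{\beta}}})\}$; the upper bound as you wrote it is an identity. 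Both inequalities hold only if the perturbed functions belong to the class $\mathcal{F}_n$ over which $pl_n$ is maximized.

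The path you propose, $\Lambda+(\boldsymbol{\beta}-t)^\top\int\boldsymbol{h}_n^*\circ\Lambda\,d\Lambda$, is the Murphy--van der Vaart path for unrestricted monotone $\Lambda$. When $\Lambda$ is a spline, $\boldsymbol{h}_n^*\circ\Lambda$ is a piecewise polynomial of higher degree, with breakpoints at the preimages under $\Lambda$ of the knots of $\boldsymbol{h}_n^*$. So the path leaves the I-spline space, and neither inequality is available. This is exactly the obstruction the paper singles out. Your ``approximation terms'' do not cover it: they are score-level biases, whereas an error from projecting the path back into the sieve enters $lr(\boldsymbol{\beta}_0)$ multiplied by $n$, and so would have to be $o_p(n^{-1})$ in $\ell_n$.

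The paper instead uses the additive path $\Lambda+(\boldsymbol{\beta}-\boldsymbol{t})^\top\boldsymbol{h}_n$. Here $\boldsymbol{h}_n$ is a B-spline on the same knots that approximates $\boldsymbol{h}_\ast$ and has uniformly bounded derivatives, so the path stays in the linear spline space. The price is that monotonicity, together with the derivative and level bounds built into $\mathcal{F}_n$, now requires the derivatives of the \emph{estimators}, $\hat\lambda_n$ and $\hat\lambda_{n0}$, to lie in $(b_l/2,3b_u/2)$ with probability tending to one. Your remark that $\Lambda_0$ has derivative bounded below does not give this, since an $L_2$ rate for $\hat\Lambda$ says nothing about $\hat\lambda$.

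Establishing this is the content of the paper's Lemma~1, in three steps:
\begin{enumerate}
\item The derivative bound $M$ built into the paper's $\mathcal{F}_n$ converts the $L_2$ rate into $\|\hat\Lambda_n-\Lambda_n\|_\infty\le c\,d^{2/3}(\hat\Lambda_n,\Lambda_n)$.
\item B-spline stability passes this bound to the spline coefficients.
\item The B-spline derivative formula, with knot spacing of order $n^{-\nu}$, then yields $\|\hat\lambda_n-\lambda_0\|_\infty\to0$ provided $p+r>3/2$; this is where that part of C1 is used.
\end{enumerate}

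The paper also avoids expanding $pl_n$ along arbitrary $\tilde{\boldsymbol{\beta}}_n$, which would require this derivative control for every $\hat\Lambda_{\tilde{\boldsymbol{\beta}}_n}$. Instead it sandwiches $lr(\boldsymbol{\beta}_0)$ directly between $f_{ln}$ and $f_{rn}$, which are built only from $(\hat{\boldsymbol{\beta}}_n,\hat\Lambda_n)$ and $(\boldsymbol{\beta}_0,\hat\Lambda_{n0})$. With that fix in place, the remaining ingredients of your plan (Donsker/Glivenko--Cantelli classes, second-order no-bias, efficient-score expansion) match the paper's Lemma~2.
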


It is natural to attempt a proof of Theorem~\ref{thm1} by adapting the approach used to develop Theorem 3.1 in~\cite{murphy:vandervaart:1997}.
However, a unique challenge for this spline-based testing approach is that the proposed likelihood ratio statistic depends on the sieve MLE, which is restricted to the monotone spline function class rather than the class of arbitrary monotone functions. As a result, key arguments in \cite{murphy:vandervaart:1997} are not directly applicable to our setting. Our proof in the Supplementary Document relies on novel techniques that exploit properties of B-spline basis functions, as discussed in \cite{deboor:01} and \cite{schumaker:2007}.

\section{Simulation Studies}
\label{sec4}

The main objective of our simulation is to compare our proposed likelihood ratio approach with the corresponding Wald-type approaches based on  methods for estimating the information matrix adopted by~\cite{zhang:Hua:Huang} and ~\cite{wang:hudgens:2016}, respectively.

The interval-censored data we used for the simulation is generated similarly to~\cite{zhang:Hua:Huang}. Specifically, let the event time $T$ follow the Cox PH model $\Pr(T>t|\boldsymbol{Z}=\boldsymbol{z})=e^{-\Lambda(t)e^{\boldsymbol{\beta}_0^\top \boldsymbol{z}}}$, with $\boldsymbol{\beta}_0=(\beta_{01}, \beta_{02}, \beta_{03})^\top=(-1.0, 0.5, 1.5)^\top$, $\Lambda_0(t)=t^{1/2}$, and $\boldsymbol{Z}=(Z_1,Z_2,Z_3)^\top$, where $Z_1$ follows the uniform distribution on $[0,1]$, $Z_2$ follows the standard normal distribution and $Z_3$ follows the Bernoulli distribution with $\Pr(Z_3=1)=0.5$.  And let two independent censoring times $U_0$ and $V_0$ both follow the exponential distribution with its mean equal to $2$. For one observation  $(u_0, v_0, t, \boldsymbol{z})$ of $(U_0,V_0, T,\boldsymbol{Z})$, first let $u=\min(u_0,v_0)$ and $v=\max(u_0,v_0)$; then let $u=\min(u,5)$ and $v=\min(v,5)$. Next if $v-u\ge0.05$, stop; if $v-u<0.05$ and $v>0.05$, let $u=v-0.05$; if $v-u<0.05$ and $v<0.05$, let $v=v+0.05$. By this approach, one interval-censored observation is obtained as $(u,v,\delta_{1},\delta_{2},\delta_{3},\boldsymbol{z})$, where $\delta_1=1_{[t\le u]}$, $\delta_2=1_{[u<t\le v]}$, $\delta_3=1-\delta_1-\delta_2$. Finally the interval-censored data sample for the simulation is made of this type of interval-censored observations. And these interval-censored observations are generated independently.


First, we assess our proposed unconstrained maximization (UM) approach by comparing its performance to that of the EM algorithm adopted by~\cite{wang:hudgens:2016}. The UM approach is not only significantly faster at computing, but producing slightly less biased sieve MLE. This comparison is based on 100 independent samples of the interval-censored data described in the previous paragraph, with sample sizes of 200 and 500. The results, including computing time and estimation bias, are presented in Table~\ref{tab1} and Figure~\ref{fig1}. For small sample sizes (e.g., 50), the function implementing the EM algorithm~\cite[]{wang:hudgens:2016} in the R package \texttt{ICsurv}~\cite[]{icsurv} frequently fails to run successfully. Moreover, as noted by \cite{wang:hudgens:2016} and verified in our preliminary study, the constrained projection algorithm used by \cite{zhang:Hua:Huang} is also relatively slow and prone to failure. By contrast, the proposed UM approach is robust, with no failures occurring for all sample sizes considered. Hence, the proposed UM approach is superior at computing the sieve MLE and is adopted for the remainder of the simulation studies.

\begin{table}[htp]
\centering\caption{Comparisons between UM and EM with $\boldsymbol{\beta}_0=(-1.0, 0.5, 1.5)^\top$}\label{tab1}
\begin{tabular}{rrrrr}
\hline
& \multicolumn{3}{c}{Bias}&\multirow{2}{*}{Time (seconds)}\\
\cmidrule(lr){2-4}
&$\beta_{01}$ &$\beta_{02}$ & $\beta_{03}$& \\
\hline
Sample size 200&&&&\\
UM&-7.06e-2&3.87e-2& 9.56e-2&4.55\\
EM &-9.47e-2&5.07e-2&1.21e-1&326.00 \\
\hline
Sample size 500&&&&\\
UM&-8.74e-3&1.10e-2& 2.36e-4&11.36\\
EM& -2.69e-2&1.97e-2&1.96e-2& 780.04 \\
\hline
\end{tabular}
\end{table}

\begin{figure}
\centering
\vspace{-0ex}
\mbox{\hspace{0ex}
\subfigure{\scalebox{0.45}{\includegraphics{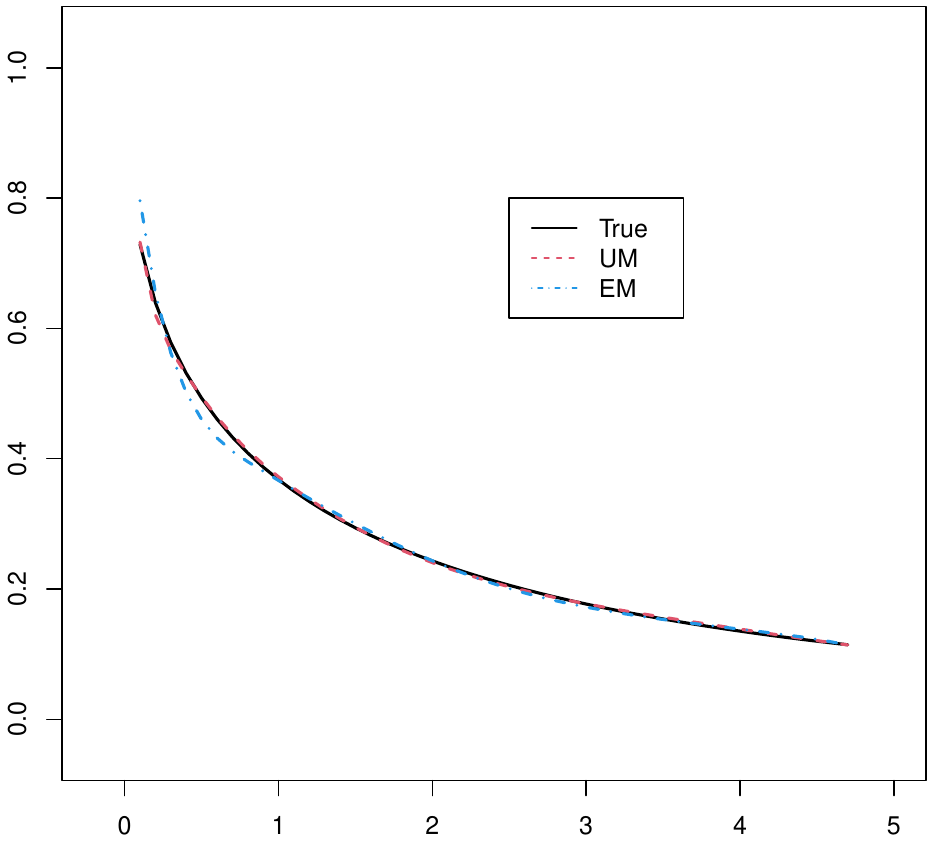}}} 
\quad
\subfigure{\scalebox{0.45}{\includegraphics{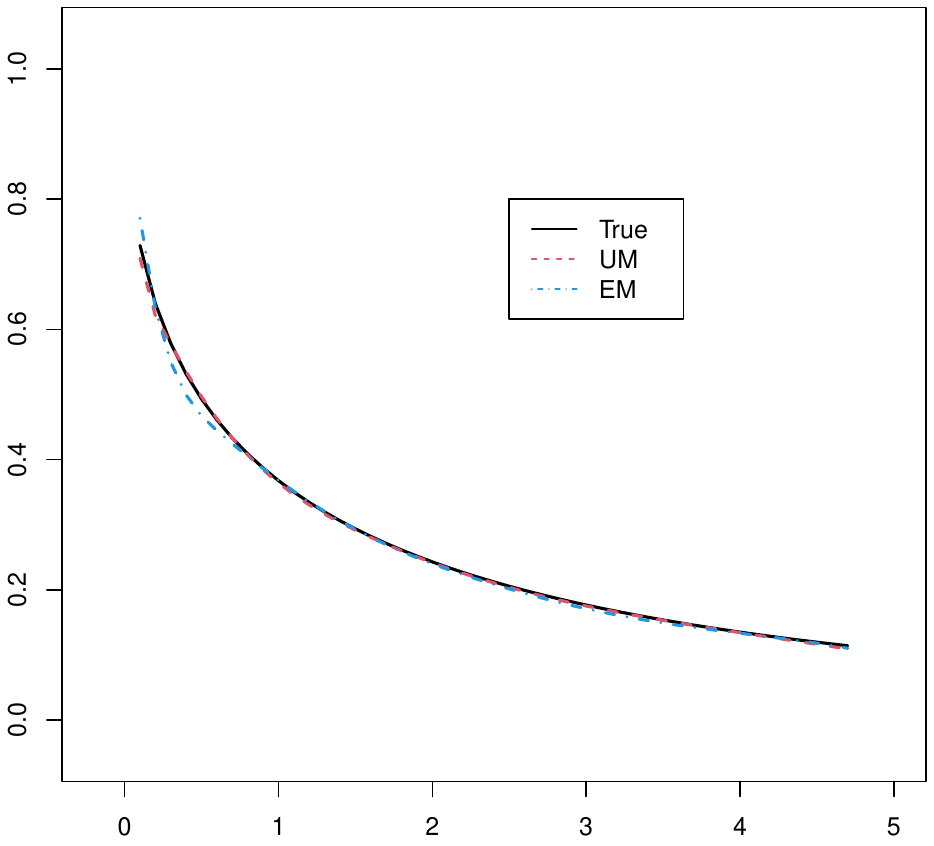}}}
}
\caption{Sieve MLE of baseline survival function, $S(t)=e^{-t^{1/2}}$, by UM and EM, with sample size 200 (left) and 500 (right)}\label{fig1}
\end{figure}

Tables~\ref{tab2}-\ref{tab3} present the estimated power and type I error, respectively. Specifically, they report the rejection rates (RR) among 10000 independent experiments for three local tests for $\beta_{1}=0$, $\beta_{2}=0$ and $\beta_{3}=0$, respectively, and for the global test for $\boldsymbol{\beta}=\boldsymbol{0}$, when conducting our proposed likelihood ratio test (LRT),  the Wald-type test based on~\cite{zhang:Hua:Huang} (WaldT1), and  the Wald-type test based on~\cite{wang:hudgens:2016} (WaldT2). Sample sizes 50, 200 and 200 are chosen regarding both the power and the type I error, for which all tests are conducted at a two-sided level of 0.05. The power estimation is based on the aforementioned interval-censored data. For Type I error, the interval-censored data is generated by the same approach except that the event time $T$ follows the Cox PH model with ${\boldsymbol{\beta}}_0=\boldsymbol{0}$ instead of $\boldsymbol{\beta}_0=(-1.0, 0.5, 1.5)^\top$. 

Table~\ref{tab2} shows that the LRT consistently achieves the highest power among the three testing approaches across all sample sizes. Table~\ref{tab3} indicates that the estimated type I error converges toward the nominal 5\% level as the sample size increases from 200 to 500. For sample size 50, although the LRT yields slightly inflated type I error rates, the WaldT1 and WaldT2 exhibit much larger deviations from the 5\% level, which are likely caused by the biased variance estimates of $\hat{\boldsymbol{\beta}}_n$ produced separately by~\cite{zhang:Hua:Huang} and~\cite{wang:hudgens:2016}.


Lastly, the widely used partial likelihood approach is assessed for being an alternative for analyzing the interval-censored data. To facilitate this, the same interval-censored data is converted to a right-censored one through midpoint imputation, by assigning the midpoint of each finite censoring interval as the event time. The maximum partial likelihood estimation (MPLE) involving the data conversion fails to converge, since the estimation bias for ${\boldsymbol{\beta}}_0=(-1.0, 0.5, 1.5)^\top$ is $(0.33, -0.17, -0.37)^\top$ for sample size 200, but becomes $(0.35, -0.18, -0.38)^\top$ (even larger) for sample size 500. These results are based on 1000 independent samples. This finding agrees with the corresponding arguments made by~\cite{wu:halabi:2019}.


\begin{table}[htp]
\centering\caption{Rejection rates of local and global tests when $\boldsymbol{\beta}_0=(-1.0, 0.5, 1.5)^\top$}\label{tab2}
\begin{tabular}{rrrrr}
\hline
&\multicolumn{4}{c}{Rejection rate}\\
\cmidrule(lr){2-5}
&$H_0:\beta_{1}=0$&$H_0:\beta_{2}=0$&$H_0:\beta_{3}=0$&$H_0:\boldsymbol{\beta}=\boldsymbol{0}$\\
\hline
Sample size 50&&&&\\
WaldT1   &8.77\%&26.49\%&70.13\%& 38.99\%\\
WaldT2  &47.97\%&61.10\%&92.95\%& 65.60\%\\
LRT  &34.57\%&69.27\%&97.57\%& 98.56\%\\
\hline
Sample size 200&&&&\\
WaldT1  &77.71\%&99.69\%&100\%&100\%\\
WaldT2  &78.39\%&99.80\%&99.99\%&99.99\%\\
LRT  &84.09\%&99.86\%&100\%&100\%\\
\hline
Sample size 500&&&&\\
WaldT1 &99.65\%&100\%&100\%&100\%\\
WaldT2   &99.64\%&100\%&100\%&100\%\\
LRT  &99.77\%&100\%&100\%&100\%\\
\hline
\end{tabular}
\end{table}

\begin{table}[htp]
\centering\caption{Rejection rates of local and global tests when $\boldsymbol{\beta}_0=(0, 0, 0)^\top$}\label{tab3}
\begin{tabular}{rrrrr}
\hline
&\multicolumn{4}{c}{Rejection rate}\\
\cmidrule(lr){2-5}
&$H_0:\beta_{1}=0$&$H_0:\beta_{2}=0$&$H_0:\beta_{3}=0$&$H_0:\boldsymbol{\beta}=\boldsymbol{0}$\\
\hline
Sample size 50&&&&\\
WaldT1  &2.07\%&1.69\%&2.21\%& 0.83\%\\
WaldT2  &24.39\%&8.41\%&14.99\%& 1.74\%\\
LRT &7.08\%&6.71\%&7.00\%& 7.39\%\\
\hline
Sample size 200&&&&\\
WaldT1  &4.27\%&4.28\%&4.17\%&3.55\%\\
WaldT2  &4.49\%&5.45\%&4.81\%&4.22\%\\
LRT &5.48\%&5.68\%&5.19\%&5.34\%\\
\hline
Sample size 500&&&&\\
WaldT1  &4.57\%&4.44\%&4.54\%&4.45\%\\
WaldT2  &4.60\%&4.74\%&4.86\%&4.61\%\\
LRT &5.22\%&4.84\%&5.06\%&4.93\%\\
\hline
\end{tabular}
\end{table}

\section{A Real Example}
\label{sec5}

The data were collected for a prospective study to assess the HIV-1
infection rate among people with hemophilia. The individuals were at risk of contracting HIV-1 through plasma from blood donors. In this study, 544 patients were categorized into four groups based on their average annual dose of blood products. Specifically, they were classified into High, Medium, Low, or No dose group. The goal of the
study was to compare the HIV-1 infection rate between these exposure groups. The exact time to HIV-1 infection for each patient was unknown, but was known to occur before, between, or after
certain observation times.
The Hemophilia dataset is available in the R package ICsurv~\cite[]{icsurv}. and more details pertaining to the data can be found in~\cite{goedert:kessler:1989} and \cite{kroner:rosenberg:1994}.

Assume the Hemophilia data follows the Cox PH model, we use the three testing approaches, the WaldT1, WaldT2 and LRT, to assess the effect of blood product dose levels on the risk of HIV-1 infection. Table~\ref{tab4} presents the p-values of three local tests for the effect of each dose level (Low, Medium, High) on the risk of contracting HIV-1 and the global test for the effect of any dose level. In general, the results of the local tests indicate that receiving blood products significantly increased the risk of contracting HIV-1 for study subjects. More specifically, the p-values for the WaldT1 and LRT imply that the test for the effect of a higher dose level yields a more significant result and the global test gives the most significant result, which reflect the practical situation. In addition, the LRT is consistently more significant than the WaldT1. On the other hand, although the WaldT2 shows highly significant p-values for all three local tests, these p-values fail to reflect the trend of the three dose levels. More concerning the WaldT2 results in a p-value of 1 for the global test, which is clearly inappropriate and raises doubts about its practical utility. Figure~\ref{fig2} displays nearly identical estimates of the baseline survival function obtained via the EM and UM computation approaches. This indicates that, among patients who did not receive blood products, fewer than 15\% contracted HIV-1.

\begin{table}[htp]
\centering\caption{P-values for blood product dose effect on HIV-1 risk (Hemophilia data)}\label{tab4}
\begin{tabular}{rrrrr}
\hline
&\multicolumn{4}{c}{P-value}\\
\cmidrule(lr){2-5}
&Low &  Medium & High & Any\\
\hline
WaldT1 &7.73e-15&3.32e-31&5.23e-41& 7.40e-43\\
WaldT2 &4.06e-88&1.53e-158&2.52e-31& 1\\
LRT &4.00e-19&1.56e-51&2.90e-52& 3.40e-76\\
\hline
\end{tabular}
\end{table}

\begin{figure}
\begin{center}
			\scalebox{0.5}{\includegraphics[angle =0]{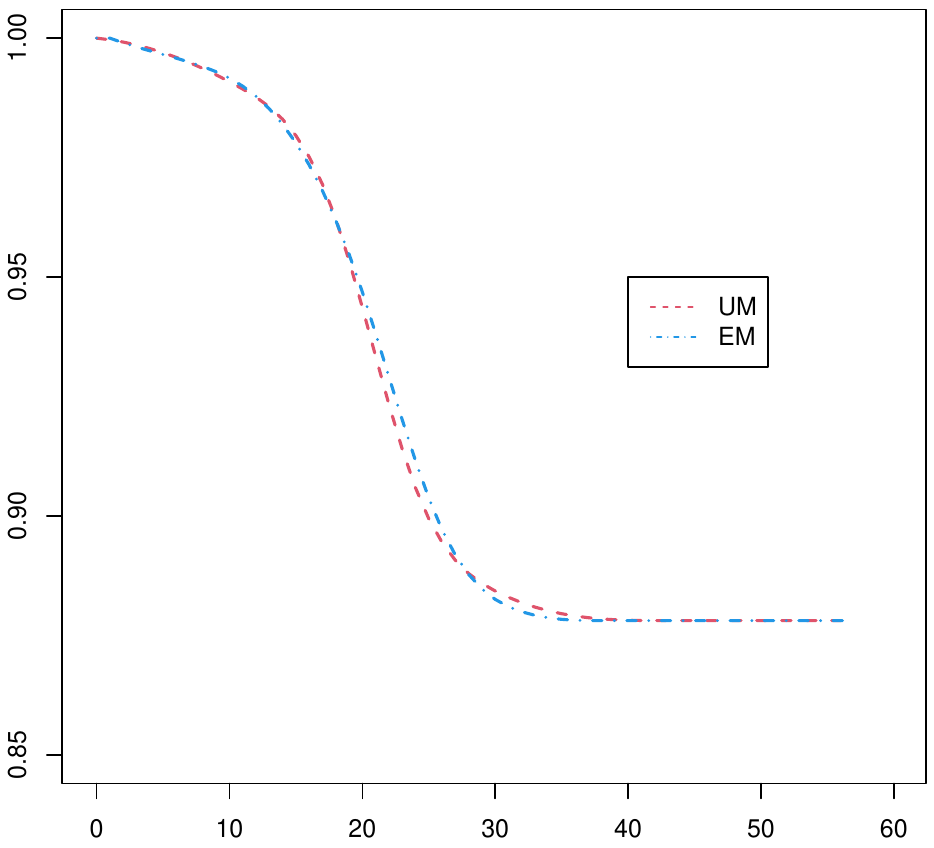}}
\caption{Estimated baseline survival functions by UM and EM}\label{fig2}
\end{center}
\end{figure}

\section{Concluding Remarks}
\label{sec6}

In this paper, we develop a framework of the likelihood ratio test (LRT) for the Cox PH model with interval-censored data, a common feature in clinical studies where event times are assessed intermittently. Conventional approaches that approximate interval-censored data as right-censored can introduce bias and misleading inference; our method addresses these challenges with both computational efficiency and theoretical rigor.

We contribute a robust unconstrained maximization (UM) algorithm to compute the sieve semiparametric MLE. By reparameterizing spline coefficients, the UM algorithm avoids the instability and slow convergence of constrained or EM-based methods and can be applied to other nonparametric and semiparametric models. We also establish the asymptotic chi-square distribution of the LRT statistic, leveraging new results on B-spline–based estimators, thereby extending likelihood-based inference tools for interval-censored data.

Simulation studies demonstrate that the LRT outperforms existing Wald-type tests (WaldT1 and WaldT2). WaldT1 is overly conservative in small samples, while WaldT2 exhibits erratic type I error rates. By contrast, the LRT maintains stability, achieves higher power, and uniquely ensures that the global test is more powerful than its corresponding local tests, thereby providing a reliable inferential framework in which the global test is followed by post-hoc analyses, even though it shows slightly inflated type I error rates in very small samples.
Analysis of the Hemophilia dataset illustrates the practical utility of the LRT, detecting dose-dependent increases in HIV-1 infection risk, whereas WaldT2 produces unreliable results. These findings highlight the robustness and interpretability of our approach in real-world applications.

The proposed LRT offers a reliable and flexible framework for analyzing interval-censored survival data, with particular advantages in small-sample studies. It can extend naturally to more complex semiparametric survival models, including frailty, competing risks, and multi-state designs. Future work could explore variable selection strategies for high-dimensional covariates to further enhance performance.

In summary, our method provides both methodological innovation and practical utility, equipping researchers with a robust tool for interval-censored time-to-event data in clinical and epidemiological studies.

\section*{Funding}

This research was supported in part by the National Cancer Institute Awards R21 CA263950, R01 CA256157 and R01 CA249279, the  Food and Drug Administration Award 1U01FD007857-01, the United States Army Medical Research Award HT9425-23-1-0393, and the Prostate Cancer Foundation.

\bibliographystyle{natbib}
\bibliography{paper_ref}

\section*{Supplementary Material}
The Supplementary Material provides the technical details required for the theoretical developments.
 \subsection*{Technical assumptions}
\begin{enumerate}[C1]
\item
 The true hazard function $\lambda_0$ has a positive lower bound on $[0,\tau_1]$ and   has up to $(p-1)$th continuous derivatives on $[0,\tau_1]$. In addition, its $p$th derivative satisfy $\left|\lambda_0^p(v)-\lambda_0^p(u)\right|\le c |v-u|^r$ for $u,v$ with $0\le u<v\le\tau_1$ and a constant $c>0$, where $p$ is a natural number and $r>0$ satisfying $p+r>3/2$.
\item
Let $[\tau_0,\tau_1]$ be the support interval for the union of $U$ and $V$ with $0<\tau_0<\tau_1$. The joint density of $(U,V)$ is smooth enough and has a positive lower bound in their closed joint support triangle obtained by $V-U\ge \eta$ for $0<\eta < \tau_1-\tau_0$.
\item
Covariate variable $\boldsymbol{Z}$ is bounded, that is, there exists a positive scalar $z_0$ such that $\|\boldsymbol{Z}\|<z_0$. Here $\|\cdot\|$ denotes Euclidean norm.  
\item
There exists $\eta\in(0,1)$, such that $\boldsymbol{a}^\top Var(\boldsymbol{Z})\boldsymbol{a}\le \eta \boldsymbol{a}^\top E(\boldsymbol{Z}\boldsymbol{Z}^\top)\boldsymbol{a}$ for all $\boldsymbol{a}\in \mathbb{R}^K$.
\end{enumerate}

\subsection*{Asymptotic consistency of semiparametric sieve MLE}
\label{asymptotic}
 To establish Theorem~1 in the main paper, it is necessary to first show the asymptotic convergence and normality of the proposed semiparametric sieve MLE.
  This section describes these asymptotic results for the proposed sieve MLE.
  The technical details for developing these results closely follow~\cite{wu:xu:19} and are therefore omitted.

  First, because a monotone I-spline function can be expressed in B-spline form~\cite[]{wu:thesis:2010}, we replace the I-spline-based baseline cumulative hazard function described in Section~2 in the main paper with an equivalent monotone B-spline function for theoretical arguments. This allows us to leverage the well-developed properties of B-spline functions in~\cite{deboor:01} and~\cite{schumaker:2007}. Accordingly, under conditions C1--C4, the B-spline based sieve MLEs $\left(\hat{\boldsymbol{\beta}}_n,\hat{\Lambda}_n\right)$ and $\hat{\Lambda}_{n0}$ satisfy $d\left\{\left(\hat{\boldsymbol{\beta}}_n,\hat{\Lambda}_n\right), (\boldsymbol{\beta}_0,\Lambda_0)\right\}=O_P\left(n^{-\frac{p+r}{2(p+r)+1}}\right)$ and $d\left(\hat{\Lambda}_{n0},\Lambda_0\right)=O_P\left(n^{-\frac{p+r}{2(p+r)+1}}\right)$, respectively, where the metric $d(,)$ is defined as in~\cite{zhang:Hua:Huang}.


Regarding the asymptotic normality of the parametric component, the efficient score for the covariate coefficients $\boldsymbol{\beta}$ is necessary, which is based on the theory of semiparametric efficiency~\cite[]{bickel:93}.
Let the likelihood function for a single observation be $$L(\boldsymbol{\beta},\Lambda;\boldsymbol{x})=\left\{1-e^{-\Lambda(u)e^{\boldsymbol{\beta}^\top \boldsymbol{z}}}\right\}^{\delta_1}
\left\{e^{-\Lambda(u)e^{\boldsymbol{\beta}^\top\boldsymbol{z}}} - e^{-\Lambda(v)e^{\boldsymbol{\beta}^\top\boldsymbol{z}}}\right\}^{\delta_2}
\left\{e^{-\Lambda(v)e^{\boldsymbol{\beta}^\top\boldsymbol{z}}}\right\}^{\delta_3},$$
where $\boldsymbol{x}=(u,v,\delta_1,\delta_2,\delta_3,\boldsymbol{z})$, with $\boldsymbol{z}$ being the covariates, $\delta_1=1_{[t\le u]}$, $\delta_2=1_{[u< t\le u]}$ and $\delta_3=1-\delta_1-\delta_2$, for the event time $t$ and its censoring times $u$ and $v$. Let $l(\boldsymbol{\beta},\Lambda;\boldsymbol{x})=\log L(\boldsymbol{\beta},\Lambda;\boldsymbol{x})$. Consider a parametric smooth submodel with parameter $\left(\boldsymbol{\beta}, \Lambda_{(h,s)}\right)$, where
$\Lambda_{(h,s)}=\Lambda+sh$. Then
$\Lambda_{(h,0)}=\Lambda$ and $\left.\frac{\partial \Lambda_{(h,s)}}{\partial s}\right\vert_{s=0}=h$.  The score operator for $\Lambda$ is
\begin{equation}\label{score_op}
l_\Lambda\left(\boldsymbol{\beta},\Lambda;\right)h=\left.\frac{\partial}{\partial s}l\left(\boldsymbol{\beta},\Lambda_{(h,s)};\right)\right\vert_{s=0}
\end{equation}
According to~\citep{bickel:93}, the efficient score vector for $\boldsymbol{\beta}_0$ with dimension $K$ is $l_{\boldsymbol{\beta}}(\boldsymbol{\beta}_0,\Lambda_0;)-l_\Lambda(\boldsymbol{\beta}_0,\Lambda_0;)\boldsymbol{h}_\ast$, where
\begin{equation}\label{hstar}
\boldsymbol{h}_\ast=(h_{\ast 1},\cdots,h_{\ast K})^\top=
\arg\min_{\boldsymbol{h}}E\left\|l_{\boldsymbol{\beta}}(\boldsymbol{\beta}_0,\Lambda_0;)-l_\Lambda(\boldsymbol{\beta}_0,\Lambda_0;)\boldsymbol{h}\right\|^2,
\end{equation}
with $\boldsymbol{h}=(h_1,\cdots,h_K)^\top$ and $l_\Lambda(\boldsymbol{\beta}_0,\Lambda_0;)\boldsymbol{h}=
(l_\Lambda(\boldsymbol{\beta}_0,\Lambda_0;)h_1,\cdots,
l_\Lambda(\boldsymbol{\beta}_0,\Lambda_0;)h_K)^\top$. Note that $\boldsymbol{h}_\ast$ is called the least favorable direction and can be solved for each component separately. 

Denote the efficient score for $\boldsymbol{\beta}_0$ by $\tilde{l}(\boldsymbol{\beta}_0,\Lambda_0;)$, where
\begin{equation}\label{efficientscore}
\tilde{l}(\boldsymbol{\beta}_0,\Lambda_0;)=l_{\boldsymbol{\beta}}(\boldsymbol{\beta}_0,\Lambda_0;)-l_\Lambda(\boldsymbol{\beta}_0,\Lambda_0;)\boldsymbol{h}_\ast.
\end{equation}
It follows that under conditions C1--C4, the asymptotic normality holds, that is,
\begin{equation}\label{normality}
\sqrt{n}\left(\hat{\boldsymbol{\beta}}_n-\boldsymbol{\beta}_0\right)
=\sqrt{n}I^{-1}(\boldsymbol{\beta}_0)\mathbb{P}_n\tilde{l}(\boldsymbol{\beta}_0,\Lambda_0;)+o_p(1)\to_d N\left(0,I^{-1}(\boldsymbol{\beta}_0)\right)
\end{equation}
where $I(\boldsymbol{\beta}_0)=E\tilde{l}(\boldsymbol{\beta}_0,\Lambda_0;)\tilde{l}(\boldsymbol{\beta}_0,\Lambda_0;)^\top$.

\subsection*{Proof of Theorem~1 in the main paper}


 Throughout the remainder of this document, $\|\cdot\|_\infty$ denotes the supremum norm on $[\tau_0, \tau_1]$, unless stated otherwise.
 
By the Jackson-type theorem in~\cite{deboor:01}, there exists a sequence of B-spline vectors $\{\boldsymbol{h}_n\}_{n>N_0}$ with $\boldsymbol{h}_n=(h_{n1},\ldots,h_{nK})^\top$ such that, for each $k\in\{1,\ldots,K\}$,
\[
\|h_{nk}-h_{\ast k}\|_\infty = O(n^{-\nu}), \qquad 0<\nu<1,
\]
where $h_{\ast k}$ is a component determined by (\ref{hstar}), and the derivatives of $\{h_{nk}\}_{n>N_0}$ are uniformly bounded. Here $N_0$ is sufficiently large, and $O(n^\nu)$ refers to the number of subintervals determined by the interior knots, which are placed uniformly or nearly uniformly.


In what follows, we repeatedly use the minimum and maximum of the true hazard $\lambda_0$ on $[0,\tau_1]$, denoted by $b_l$ and $b_u$, respectively. Under condition C1, these quantities are well defined and satisfy $0<b_l<b_u$.

Let $\mathcal{F}_n$ denote the class of monotone B-spline functions on $[\tau_0,\tau_1]$ whose values at $\tau_0$ are bounded above by $M\tau_0$ with $M=2b_u$, and whose derivatives are uniformly bounded by $M$ on $[\tau_0,\tau_1]$. In addition, all functions in $\mathcal{F}_n$ and $\boldsymbol{h}_n$ are constructed from the same B-spline basis. Finally, let $\nu=\tfrac{1}{2(p+r)+1}$ for $O(n^\nu)$.

We assume that the computation of both $\hat{\Lambda}_n$ and $\hat{\Lambda}_{n0}$ is restricted to $\mathcal{F}_n$, which facilitates the derivation of consistency and asymptotic normality for the sieve MLE (see the previous section) and is essential for the remaining proof.

Define $\boldsymbol{\omega}_{\boldsymbol{t}}(\boldsymbol{\beta}, \Lambda, \boldsymbol{g})=\left({\boldsymbol{t}}, \Lambda_{\boldsymbol{t}}(\boldsymbol{\beta},\Lambda,\boldsymbol{g})\right)$, where
$$\Lambda_{\boldsymbol{t}}(\boldsymbol{\beta},\Lambda,\boldsymbol{g})=\Lambda + (\boldsymbol{\beta} - {\boldsymbol{t}})^\top \boldsymbol{g},$$ for vector function $\boldsymbol{g}$ with $K$ components.

Then $\boldsymbol{\omega}_{\hat{\boldsymbol{\beta}}_n}\left(\hat{\boldsymbol{\beta}}_n, \hat{\Lambda}_n,\boldsymbol{h}_n\right)=\left(\hat{\boldsymbol{\beta}}_n, \hat{\Lambda}_n\right)$ and $\boldsymbol{\omega}_{\boldsymbol{\beta}_0}\left(\boldsymbol{\beta}_0, \hat{\Lambda}_{n0}, \boldsymbol{h}_n\right)=\left(\boldsymbol{\beta}_0,\hat{\Lambda}_{n0}\right)$.  Denote
\begin{align}\label{derilambda}
\hat{\lambda}_n(y)\equiv\frac{\partial\hat{\Lambda}_{n}(y)}{\partial y}~~\text{and}~~\hat{\lambda}_{n0}(y)\equiv\frac{\partial\hat{\Lambda}_{n0}(y)}{\partial y}.
\end{align}
If $\hat{\lambda}_{n}>b_1$  and $\hat{\lambda}_{n0}>b_1$ for $b_1=b_l/2$, then there exists a small $\epsilon_0>0$, we have
\begin{align}\label{lambdan}
\epsilon_0\sum_{k=1}^K\left|h_{nk}(v)-h_{nk}(u)\right|<
\hat{\Lambda}_n(v)-\hat{\Lambda}_n(u)
\end{align}
and
\begin{align}\label{lambdan0}
\epsilon_0\sum_{k=1}^K\left|h_{nk}(v)-h_{nk}(u)\right|<
\hat{\Lambda}_{n0}(v)-\hat{\Lambda}_{n0}(u),
\end{align}
for any $u$, $v$ with $\tau_0\le u<v\le \tau_1$. (\ref{lambdan}) and (\ref{lambdan0}) can be both easily proved from the fact that the derivatives of  $\left\{h_{nk}\right\}_n$ are uniformly bounded, when $n$s are large.
Then for $\|\boldsymbol{\beta}-{\boldsymbol{t}}\|<\epsilon_0$ (with Euclidean norm $\|\cdot\|$),
\begin{align*}
\Lambda_{\boldsymbol{t}}\left(\boldsymbol{\beta},\hat{\Lambda}_n,\boldsymbol{h}_n\right)(v)-
\Lambda_{\boldsymbol{t}}\left(\boldsymbol{\beta},\hat{\Lambda}_n,\boldsymbol{h}_n\right)(u)>0
\end{align*}
and
\begin{align*}
\Lambda_{\boldsymbol{t}}\left(\boldsymbol{\beta},\hat{\Lambda}_n,\boldsymbol{h}_n\right)(v)-
\Lambda_{\boldsymbol{t}}\left(\boldsymbol{\beta},\hat{\Lambda}_n,\boldsymbol{h}_n\right)(u)>0,
\end{align*}
for any $u$, $v$ with $\tau_0\le u<v\le \tau_1$,
by (\ref{lambdan}) and (\ref{lambdan0}), respectively.

 Denote $\lambda_{\boldsymbol{t}}\left(\boldsymbol{\beta},\hat{\Lambda}_n,\boldsymbol{h}_n\right)(y)\equiv\frac{\partial\Lambda_{\boldsymbol{t}}\left(\boldsymbol{\beta},\hat{\Lambda}_n,\boldsymbol{h}_n\right)(y)}{\partial y}$ and $\lambda_{\boldsymbol{t}}\left(\boldsymbol{\beta},\hat{\Lambda}_{n0},\boldsymbol{h}_n\right)(y)\equiv\frac{\partial\Lambda_{\boldsymbol{t}}\left(\boldsymbol{\beta},\hat{\Lambda}_{n0},\boldsymbol{h}_n\right)(y)}{\partial y}$. If
$\hat{\lambda}_{n}<b_2$  and $\hat{\lambda}_{n0}<b_2$ for $b_2=(3b_u)/2$.  Then by the derivatives of $\{h_{nk}\}$ are uniformly bounded, for $\|\boldsymbol{\beta}-{\boldsymbol{t}}\|<\epsilon_0$ with $\epsilon_0$ sufficiently small, we have
\begin{align*}
\lambda_{\boldsymbol{t}}\left(\boldsymbol{\beta},\hat{\Lambda}_n,\boldsymbol{h}_n\right)< M~~\text{and}~~
\lambda_{\boldsymbol{t}}\left(\boldsymbol{\beta},\hat{\Lambda}_{n0},\boldsymbol{h}_n\right)< M,
\end{align*}
on $[\tau_0,\tau_1]$.

Besides, if $\hat{\Lambda}_n(\tau_0)<(2M/3)\tau_0$ and $\hat{\Lambda}_{n0}(\tau_0)<(2M/3)\tau_0$. Then since $\{h_{nk}\}$ are uniformly bounded for sufficiently large $n$s, for $\|\boldsymbol{\beta}-{\boldsymbol{t}}\|<\epsilon_0$ with $\epsilon_0$ sufficiently small, we have
\begin{align*}
\Lambda_{\boldsymbol{t}}\left(\boldsymbol{\beta},\hat{\Lambda}_n,\boldsymbol{h}_n\right)(\tau_0)\le M\tau_0~~\text{and}~~
\Lambda_{\boldsymbol{t}}\left(\boldsymbol{\beta},\hat{\Lambda}_{n0},\boldsymbol{h}_n\right)(\tau_0)\le M\tau_0.
\end{align*}
  The preceding inequalities involving $\Lambda_{\boldsymbol{t}}$s and $\lambda_{\boldsymbol{t}}$s tell that,  $\Lambda_{\boldsymbol{t}}\left(\boldsymbol{\beta},\hat{\Lambda}_n,\boldsymbol{h}_n\right)$ and $\Lambda_{\boldsymbol{t}}\left(\boldsymbol{\beta},\hat{\Lambda}_{n0},\boldsymbol{h}_n\right)$ both belong to $\mathcal{F}_n$. Hence, with both $\hat{\Lambda}_n$ and $\hat{\Lambda}_{n0}$ estimated within $\mathcal{F}_n$, it is true that when $\left\|\hat{\boldsymbol{\beta}}_n-\boldsymbol{\beta}_0\right\|<\epsilon_0$, $b_1<\hat{\lambda}_{n}<b_2$, $b_1<\hat{\lambda}_{n0}<b_2$, $\hat{\Lambda}_n(\tau_0)<(2M/3)\tau_0$ and $\hat{\Lambda}_{n0}(\tau_0)<(2M/3)\tau_0$, we have
\begin{equation}\label{fullinequality}
\mathbb{P}_nl\left(\boldsymbol{\omega}_{\hat{\boldsymbol{\beta}}_n}\left(\boldsymbol{\beta}_0, \hat{\Lambda}_{n0},\boldsymbol{h}_n\right);\right)=
\mathbb{P}_nl\left(\hat{\boldsymbol{\beta}}_n, \Lambda_{\hat{\boldsymbol{\beta}}_n}\left(\boldsymbol{\beta}_0,\hat{\Lambda}_{n0},\boldsymbol{h}_n\right);\right)
\le\mathbb{P}_nl\left(\hat{\boldsymbol{\beta}}_n,\hat{\Lambda}_n;\right)
\end{equation}
and
\begin{equation}\label{partinequality}
\mathbb{P}_nl\left(\boldsymbol{\omega}_{\boldsymbol{\beta}_0}\left(\hat{\boldsymbol{\beta}}_n, \hat{\Lambda}_n,\boldsymbol{h}_n\right);\right)=
\mathbb{P}_nl\left(\boldsymbol{\beta}_0,\Lambda_{\boldsymbol{\beta}_0}\left(\hat{\boldsymbol{\beta}}_n, \hat{\Lambda}_n,\boldsymbol{h}_n\right);\right)
\le\mathbb{P}_nl\left(\boldsymbol{\beta}_0,\hat{\Lambda}_{n0};\right).
\end{equation}
Then by (\ref{partinequality})
\begin{align*}
lr(\boldsymbol{\beta}_0)
&=2\log\frac{L_f\left(\hat{\boldsymbol{\beta}}_n,\hat{\Lambda}_n;\right)}{L_f\left(\boldsymbol{\beta}_0,\hat{\Lambda}_{n0};\right)}\\
&=2n\mathbb{P}_n\left\{l\left(\hat{\boldsymbol{\beta}}_n,\hat{\Lambda}_n;\right)-l\left(\boldsymbol{\beta}_0,\hat{\Lambda}_{n0};\right)\right\}\\
&\le2n\mathbb{P}_n\left\{l\left(\boldsymbol{\omega}_{\hat{\boldsymbol{\beta}}_n}\left(\hat{\boldsymbol{\beta}}_n, \hat{\Lambda}_n,\boldsymbol{h}_n\right);\right)-l\left(\boldsymbol{\omega}_{\boldsymbol{\beta}_0}\left(\hat{\boldsymbol{\beta}}_n, \hat{\Lambda}_n,\boldsymbol{h}_n\right);\right)\right\}\\
&=2n\mathbb{P}_n\left\{-\left.\frac{\partial l\left(\boldsymbol{\omega}_{\boldsymbol{t}}\left(\hat{\boldsymbol{\beta}}_n,\hat{\Lambda}_n,\boldsymbol{h}_n\right);\right)}{\partial {\boldsymbol{t}}}\right\vert_{{\boldsymbol{t}}=\hat{\boldsymbol{\beta}}_n}^\top\left(\boldsymbol{\beta}_0-\hat{\boldsymbol{\beta}}_n\right)\right.\\
&~~~~~~~~~~~~~\left.-\frac{1}{2}\left(\boldsymbol{\beta}_0-\hat{\boldsymbol{\beta}}_n\right)^\top\left.\frac{\partial^2 l\left(\boldsymbol{\omega}_{\boldsymbol{t}}\left(\hat{\boldsymbol{\beta}}_n,\hat{\Lambda}_n,\boldsymbol{h}_n\right);\right)}{\partial {\boldsymbol{t}}\partial {\boldsymbol{t}}^\top}\right\vert_{{\boldsymbol{t}}=\tilde{\boldsymbol{\beta}}_n}
\left(\boldsymbol{\beta}_0-\hat{\boldsymbol{\beta}}_n\right)\right\}.
\end{align*}
The last equality in the preceding display is based on Taylor's expansion and $\tilde{\boldsymbol{\beta}}_n$ is between $\boldsymbol{\beta}_0$ and $\hat{\boldsymbol{\beta}}_n$. Since ${\boldsymbol{t}}=\hat{\boldsymbol{\beta}}_n$ reaches the maximum of $\mathbb{P}_nl\left(\boldsymbol{\omega}_{\boldsymbol{t}}\left(\hat{\boldsymbol{\beta}}_n,\hat{\Lambda}_n,\boldsymbol{h}_n\right);\right)$, it is true that $$\mathbb{P}_n\left\{\left.\frac{\partial l\left(\boldsymbol{\omega}_{\boldsymbol{t}}\left(\hat{\boldsymbol{\beta}}_n,\hat{\Lambda}_n,\boldsymbol{h}_n\right);\right)}{\partial {\boldsymbol{t}}}\right\vert_{{\boldsymbol{t}}=\hat{\boldsymbol{\beta}}_n}\right\}=0.$$ Therefore, we have
\begin{align*}
lr(\boldsymbol{\beta}_0)
\le -n\mathbb{P}_n\left\{\left(\boldsymbol{\beta}_0-\hat{\boldsymbol{\beta}}_n,\boldsymbol{h}_n\right)^\top\left.\frac{\partial^2 l\left(\boldsymbol{\omega}_{\boldsymbol{t}}\left(\hat{\boldsymbol{\beta}}_n,\hat{\Lambda}_n,\boldsymbol{h}_n\right);\right)}{\partial {\boldsymbol{t}}\partial {\boldsymbol{t}}^\top}\right\vert_{{\boldsymbol{t}}=\tilde{\boldsymbol{\beta}}_n}
\left(\boldsymbol{\beta}_0-\hat{\boldsymbol{\beta}}_n\right)\right\}.
\end{align*}
On the other hand, by (\ref{fullinequality}) and using the Taylor's expansion, we have
\begin{align*}
lr(\boldsymbol{\beta}_0)
&=2n\mathbb{P}_n\left\{l\left(\hat{\boldsymbol{\beta}}_n,\hat{\Lambda}_n;\right)-l\left(\boldsymbol{\beta}_0,\hat{\Lambda}_{n0};\right)\right\}\\
&\ge2n\mathbb{P}_n\left\{l\left(\boldsymbol{\omega}_{\hat{\boldsymbol{\beta}}_n}\left(\boldsymbol{\beta}_0, \hat{\Lambda}_{n0},\boldsymbol{h}_n\right);\right)-l\left(\boldsymbol{\omega}_{\boldsymbol{\beta}_0}\left(\boldsymbol{\beta}_0, \hat{\Lambda}_{n0},\boldsymbol{h}_n\right);\right)\right\}\\
&=2n\mathbb{P}_n\left\{\left.\frac{\partial l\left(\boldsymbol{\omega}_{\boldsymbol{t}}\left(\boldsymbol{\beta}_0,\hat{\Lambda}_{n0},\boldsymbol{h}_n\right);\right)}{\partial {\boldsymbol{t}}}\right\vert_{{\boldsymbol{t}}=\boldsymbol{\beta}_0}^\top\left(\hat{\boldsymbol{\beta}}_n-\boldsymbol{\beta}_0\right)\right\}\\
&~~~~+n\mathbb{P}_n\left\{\left(\hat{\boldsymbol{\beta}}_n-\boldsymbol{\beta}_0\right)^\top\left.\frac{\partial^2 l\left(\boldsymbol{\omega}_{\boldsymbol{t}}\left(\boldsymbol{\beta}_0,\hat{\Lambda}_{n0},\boldsymbol{h}_n\right);\right)}{\partial {\boldsymbol{t}}\partial {\boldsymbol{t}}^\top}\right\vert_{{\boldsymbol{t}}=\bar{\boldsymbol{\beta}}_n}
\left(\hat{\boldsymbol{\beta}}_n-\boldsymbol{\beta}_0\right)\right\},
\end{align*}
where $\bar{\boldsymbol{\beta}}_n$ is between $\boldsymbol{\beta}_0$ and $\hat{\boldsymbol{\beta}}_n$.
Denote
\begin{equation}\label{flnformula}
\begin{split}
f_{ln}&\equiv2n\mathbb{P}_n\left\{\left.\frac{\partial l\left(\boldsymbol{\omega}_{\boldsymbol{t}}\left(\boldsymbol{\beta}_0,\hat{\Lambda}_{n0},\boldsymbol{h}_n\right);\right)}{\partial {\boldsymbol{t}}}\right\vert_{{\boldsymbol{t}}=\boldsymbol{\beta}_0}^\top\left(\hat{\boldsymbol{\beta}}_n-\boldsymbol{\beta}_0\right)\right\}\\
&~~~~+n\mathbb{P}_n\left\{\left(\hat{\boldsymbol{\beta}}_n-\boldsymbol{\beta}_0\right)^\top\left.\frac{\partial^2 l\left(\boldsymbol{\omega}_{\boldsymbol{t}}\left(\boldsymbol{\beta}_0,\hat{\Lambda}_{n0},\boldsymbol{h}_n\right);\right)}{\partial {\boldsymbol{t}}\partial {\boldsymbol{t}}^\top}\right\vert_{{\boldsymbol{t}}=\bar{\boldsymbol{\beta}}_n}
\left(\hat{\boldsymbol{\beta}}_n-\boldsymbol{\beta}_0\right)\right\}
\end{split}
\end{equation}
and
\begin{equation}\label{frnformula}
f_{rn}\equiv-n\mathbb{P}_n\left\{\left(\boldsymbol{\beta}_0-\hat{\boldsymbol{\beta}}_n\right)^\top\left.\frac{\partial^2 l\left(\boldsymbol{\omega}_{\boldsymbol{t}}\left(\hat{\boldsymbol{\beta}}_n,\hat{\Lambda}_n,\boldsymbol{h}_n\right);\right)}{\partial {\boldsymbol{t}}\partial {\boldsymbol{t}}^\top}\right\vert_{{\boldsymbol{t}}=\tilde{\boldsymbol{\beta}}_n}
\left(\boldsymbol{\beta}_0-\hat{\boldsymbol{\beta}}_n\right)\right\}.
\end{equation}

By Lemma~1 and the asymptotic convergence of $\hat{\boldsymbol{\beta}}_n$, we have
\begin{align*}
&\Pr\left(\left\|\hat{\boldsymbol{\beta}}_n-\boldsymbol{\beta}_0\right\|<\epsilon_0, b_1<\hat{\lambda}_{n}<b_2, b_1<\hat{\lambda}_{n0}<b_,
\right.\\
&\left.~~~~~~~~~~~~~~~~~~~~~~~~~~~~~~\hat{\Lambda}_n(\tau_0)<(2M/3)\tau_0,\hat{\Lambda}_{n0}(\tau_0)<(2M/3)\tau_0
\right)
\to1.
\end{align*}
Besides, we have just proved that if
$b_1<\hat{\lambda}_{n}<b_2$, $b_1<\hat{\lambda}_{n0}<b_2$,
$\hat{\Lambda}_n(\tau_0)<(2M/3)\tau_0$, $\hat{\Lambda}_{n0}(\tau_0)<(2M/3)\tau_0$ and $\left\|\hat{\boldsymbol{\beta}}_n-\boldsymbol{\beta}_0\right\|<\epsilon_0$ for a sufficiently small $\epsilon_0>0$,
then it is true that $f_{ln}\le lr(\boldsymbol{\beta}_0) \le f_{rn}$ when $n$ is sufficiently large.

Hence, the preceding convergence implies that
\begin{equation}\label{inbetween}
\Pr\left(f_{ln}\le lr(\boldsymbol{\beta}_0) \le f_{rn}\right)\to 1.
\end{equation}

Finally, by Lemma~2 both $f_{ln}$ and $f_{rn}$ converge in distribution to the $\chi^2$-distribution with $K$ degrees of freedom. It can be shown that $lr(\boldsymbol{\beta}_0)$ also converges in distribution to the $\chi^2$-distribution with $K$ degrees of freedom. In what follows we prove this claim.

For any constant $t$
\begin{align*}
\Pr(f_{rn} \le t |lr(\boldsymbol{\beta}_0)\le f_{rn})\le \Pr(lr(\boldsymbol{\beta}_0)\le t|lr(\boldsymbol{\beta}_0)\le f_{rn}).
\end{align*}
It follows that
\begin{equation}\label{keyinequality}
\begin{split}
&\Pr(f_{rn} \le t |lr(\boldsymbol{\beta}_0)\le f_{rn}) \Pr(lr(\boldsymbol{\beta}_0)\le f_{rn})\\
&~~~~~~~~~~~~~~~~~~~~~~~~~\le \Pr(lr(\boldsymbol{\beta}_0)\le t|lr(\boldsymbol{\beta}_0)\le f_{rn}) \Pr(lr(\boldsymbol{\beta}_0)\le f_{rn}).
\end{split}
\end{equation}
By (\ref{inbetween}), for any small $\eta>0$, there exists $N$, such that
$$\Pr\left(lr(\boldsymbol{\beta}_0)\le f_{rn}\right)>1-\eta,$$
for $n>N$, which implies that
$$\Pr\left(lr(\boldsymbol{\beta}_0) > f_{rn}\right)<\eta.$$
Then
\begin{equation}\label{smallterm}
\Pr(f_{rn} \le t |lr(\boldsymbol{\beta}_0)> f_{rn}) \Pr(lr(\boldsymbol{\beta}_0)> f_{rn}) < \eta
\end{equation}
Adding (\ref{keyinequality}) by (\ref{smallterm}), we have
\begin{align*}
\Pr(f_{rn} \le t) &< \Pr(lr(\boldsymbol{\beta}_0)\le t|lr(\boldsymbol{\beta}_0)\le f_{rn}) \Pr(lr(\boldsymbol{\beta}_0)\le f_{rn}) + \eta\\
&< \Pr(lr(\boldsymbol{\beta}_0)\le t) +\eta.
\end{align*}
Since $\eta$ can be any small positive number, by contradiction we can show
$$\lim_{n\to\infty}\Pr(f_{rn}\le t)\le \liminf_{n\to\infty}\Pr(lr(\boldsymbol{\beta}_0)\le t).$$
Similarly, we can show that
\begin{align*}
\Pr(lr(\boldsymbol{\beta}_0)\le t)< \Pr(f_{ln}\le t)+\eta,
\end{align*}
when $n$ is large enough. Then
$$\limsup_{n\to\infty}\Pr(lr(\boldsymbol{\beta}_0)\le t)\le \lim_{n\to\infty}\Pr(f_{rn}\le t).$$

Since $f_{ln}$ and $f_{rn}$ both converge in distribution to the $\chi^2$-distribution with $K$ degrees of freedom and assume that $\phi$ follows this distribution. We have,
$$\lim_{n\to\infty}\Pr(f_{ln}\le t)=\lim_{n\to\infty}\Pr(f_{rn}\le t)=\Pr(\phi\le t).$$
Then, it is obvious that
$$\lim_{n\to\infty}\Pr(lr(\boldsymbol{\beta}_0)\le t)=\Pr(\phi\le t),$$
that is, $lr(\boldsymbol{\beta}_0)$ converges in distribution to the $\chi^2$-distribution with $K$ degrees of freedom. This completes the proof.~~~$\square$

\subsection*{Lemma~1 and its proof}
\begin{lem}\label{lemma1}
   Under conditions C1--C4, let $\hat{\lambda}_n$ and $\hat{\lambda}_{n0}$ be defined by~(\ref{derilambda}) in the proof of Theorem~1. 
   Assume that $b_1 = b_l/2$, $b_2 = (3b_u)/2$, and $M = 2b_u$, where $b_l$ and $b_u$, established in the proof of Theorem~1, are the minimum and maximum values, respectively, of the true hazard function $\lambda_0$ on $[0,\tau_1]$. Then
   \begin{align*}
   \Pr\!\left(b_1 < \hat{\lambda}_{n} < b_2\right) \to 1, \qquad 
   \Pr\!\left(b_1 < \hat{\lambda}_{n0} < b_2\right) \to 1,
   \end{align*}
   and
   \begin{align*}
   \Pr\!\left(\hat{\Lambda}_n(\tau_0) < \tfrac{2M}{3}\tau_0\right) \to 1, \qquad 
   \Pr\!\left(\hat{\Lambda}_{n0}(\tau_0) < \tfrac{2M}{3}\tau_0\right) \to 1.
   \end{align*}
\end{lem}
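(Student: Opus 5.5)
The plan is to upgrade the $L_2$-type rate of convergence of the sieve MLE, $d\{(\hat{\boldsymbol{\beta}}_n,\hat{\Lambda}_n),(\boldsymbol{\beta}_0,\Lambda_0)\}=O_P(n^{-(p+r)/\{2(p+r)+1\}})$ and likewise for $\hat{\Lambda}_{n0}$, into uniform convergence of the derivative $\hat{\lambda}_n$ to $\lambda_0$ on $[\tau_0,\tau_1]$, and of $\hat{\Lambda}_n(\tau_0)$ to $\Lambda_0(\tau_0)$. Once $\|\hat{\lambda}_n-\lambda_0\|_\infty=o_P(1)$ is available, the first two claims follow at once. Indeed, $b_l\le\lambda_0\le b_u$ gives
\[
\Pr\bigl(b_l/2<\hat{\lambda}_n<3b_u/2\bigr)\ge \Pr\bigl(\|\hat{\lambda}_n-\lambda_0\|_\infty<\min(b_l,b_u)/2\bigr)\to1 .
\]
For the value at $\tau_0$, the bound $\Lambda_0(\tau_0)=\int_0^{\tau_0}\lambda_0\le b_u\tau_0<(4b_u/3)\tau_0=(2M/3)\tau_0$ leaves a fixed margin $b_u\tau_0/3$. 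Hence $|\hat{\Lambda}_n(\tau_0)-\Lambda_0(\tau_0)|=o_P(1)$ suffices. The same arguments apply verbatim to $\hat{\Lambda}_{n0}$.

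The key step is the passage from the $L_2$ metric to sup-norm statements, done in three moves.

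\emph{Step 1.} Let $\Lambda_{0n}$ be the Jackson-type spline approximant of $\Lambda_0$ in the same B-spline space. It satisfies $\|\Lambda_{0n}-\Lambda_0\|_\infty=O(n^{-(p+1+r)\nu})$ and $\|\Lambda_{0n}'-\lambda_0\|_\infty=O(n^{-(p+r)\nu})$ with $\nu=1/\{2(p+r)+1\}$ (de Boor). The triangle inequality then gives $\|\hat{\Lambda}_n-\Lambda_{0n}\|_{L_2}=O_P(n^{-(p+r)\nu})$. Here the $L_2$ norm is with respect to the distribution of $U$ and $V$, which by C2 has density bounded below on $[\tau_0,\tau_1]$, so it dominates Lebesgue $L_2$. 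Note that $d$ controls $\Lambda$ in this sense because C3 and C4 give identifiability, so the $\boldsymbol{\beta}$-part separates.

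\emph{Step 2.} Expand $\hat{\Lambda}_n-\Lambda_{0n}=\sum_j c_jB_j$ and use B-spline stability (de Boor, Schumaker): $\|\sum_j c_jB_j\|_{L_2}^2\asymp h\sum_j c_j^2$ with mesh $h\asymp n^{-\nu}$ for quasi-uniform knots. This yields $\max_j|c_j|\le h^{-1/2}O_P(n^{-(p+r)\nu})$.

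\emph{Step 3.} The derivative of $\sum_j c_jB_j$ equals $\sum_j (l-1)(c_j-c_{j-1})/(t_{j+l-1}-t_j)\,B_j^{l-1}$, so $\|(\hat{\Lambda}_n-\Lambda_{0n})'\|_\infty\lesssim h^{-1}\max_j|c_j|=O_P(n^{3\nu/2-(p+r)\nu})$. This tends to zero exactly when $p+r>3/2$, which is the role of condition C1. Combining with $\|\Lambda_{0n}'-\lambda_0\|_\infty\to0$ gives the derivative claim. The sup-norm of $\hat{\Lambda}_n-\Lambda_{0n}$ is of even smaller order, $h^{-1/2}O_P(n^{-(p+r)\nu})$, which handles $\hat{\Lambda}_n(\tau_0)$.

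I expect Step 1 to be the main obstacle, namely ensuring that the metric $d$ of Zhang, Hua and Huang genuinely controls $\|\hat{\Lambda}_n-\Lambda_0\|$ in an $L_2(\text{Lebesgue})$ sense on all of $[\tau_0,\tau_1]$, including near the endpoints. Since $U$ and $V$ have a density bounded below on their joint support (C2), I would bound the Lebesgue $L_2$ norm on $[\tau_0,\tau_1]$ by a constant times the $L_2(\mu_U+\mu_V)$ norm that appears in $d$. A secondary concern is that the rate statements for the sieve MLE are taken over the restricted class $\mathcal{F}_n$. Because $\Lambda_{0n}$ lies in $\mathcal{F}_n$ for large $n$ (its derivative is near $\lambda_0\le b_u<M$, and its value at $\tau_0$ is below $M\tau_0$), the consistency and rate results quoted above still apply, and so does the argument here.
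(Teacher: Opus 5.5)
Your proof is correct. It shares the paper's skeleton: a spline approximant of the truth, the triangle inequality in $d$, the lower bound on $f_U+f_V$ from C2 to pass to Lebesgue $L_2$, and then the B-spline derivative formula with knot spacing $\gtrsim n^{-\nu}$. The crucial step from $L_2$ to the coefficients, however, is done differently.

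\textbf{The paper's route.} It first converts $L_2$ to the sup norm by an elementary bump argument. This argument uses the a priori Lipschitz bound $|(\hat{\Lambda}_n-\Lambda_n)'|<2M$, which comes from both functions lying in $\mathcal{F}_n$, and gives $\|\hat{\Lambda}_n-\Lambda_n\|_\infty\le c\,d^{2/3}$. Only then does it bound the coefficients by the sup norm.

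\textbf{Your route.} You invoke the $L_2$ (Riesz-basis) stability of B-splines on quasi-uniform knots. This yields $\max_j|c_j|\lesssim h^{-1/2}\|\hat{\Lambda}_n-\Lambda_{0n}\|_{L_2}$ in one step.

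\textbf{Comparison.} Both routes land on exactly the threshold $p+r>3/2$, because at that threshold $h^{2(p+r)/3}$ and $h^{p+r-1/2}$ coincide. Beyond the threshold yours is sharper: you get $\|\hat{\lambda}_n-\Lambda_{0n}'\|_\infty=O_P(n^{-(p+r-3/2)\nu})$. The paper gets at best an exponent of $\tfrac23(p+r-3/2)\nu$, and its choice of $r_0$ degrades this further to $\tfrac13(p+r-3/2)\nu$. Your route also does not need the $\mathcal{F}_n$ Lipschitz control or the $r_0$-slack device in this step. The restriction to $\mathcal{F}_n$ still matters for the quoted rate of $\hat{\Lambda}_n$ and elsewhere in Theorem~1. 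In exchange, the paper's route needs only the elementary sup-norm coefficient bound rather than the $L_2$ stability theorem. Both routes rely equally on quasi-uniform knots.

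\textbf{Minor correction.} Your remark that identifiability (C3--C4) is what lets $d$ control the $\Lambda$-component is misplaced. The metric $d$ is defined as a Euclidean $\boldsymbol{\beta}$-part plus an $L_2(\mu_U+\mu_V)$ $\Lambda$-part, so it dominates the latter trivially. Identifiability enters only in deriving the convergence rate itself, which both you and the paper take as given.
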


{\em Proof.}
In this proof and the proof of Lemma~\ref{lemma2}, $c$ denotes a generic constant that may vary between occurrences. 

Here, only the arguments for $\hat{\lambda}_n$ are presented ahead and those for $\hat{\lambda}_{n0}$ are very similar, thus omitted. First, we show that
\begin{align}\label{b1b2}
\Pr\left(b_1<\hat{\lambda}_n<b_2\right)\to1.
\end{align}

 By the Jackson-type theorem in~\citet{deboor:01} and condition C1, there exists $\lambda_n$, such that $\left\|\lambda_n-\lambda_0\right\|_\infty=O\left(n^{-\frac{p+r}{2(p+r)+1}}\right)$, where $\|\cdot\|_\infty$ for $\lambda_n-\lambda_0$ is in $[0,\tau_1]$. 
If $\Lambda_n=\int\lambda_n$, by $b_u<M$, it is clear that $\Lambda_n\in\mathcal{F}_n$ when restricted on $[\tau_0,\tau_1]$, for $\mathcal{F}_n$ described in the proof of Theorem~1.  In addition, $\left\|\Lambda_n-\Lambda_0\right\|_\infty\le \tau_1\left\|\lambda_n-\lambda_0\right\|_\infty=O\left(n^{-\frac{p+r}{2(p+r)+1}}\right)$, where $\|\cdot\|_\infty$ for $\Lambda_n-\Lambda_0$ is on $[\tau_0,\tau_1]$.

  Then we have
\begin{equation}\label{dishatlambda}
d\left(\hat{\Lambda}_n,\Lambda_n\right)\le d\left(\hat{\Lambda}_n,\Lambda_0\right)+d\left(\Lambda_n,\Lambda_0\right)=O_P\left(n^{-\frac{p+r}{2(p+r)+1}}\right),
\end{equation}
by $d\left(\hat{\Lambda}_n,\Lambda_0\right)=O_P\left(n^{-\frac{p+r}{2(p+r)+1}}\right)$.
Denote $f_U$ and $f_V$ as the density function of $U$ and $V$, respectively. Then by condition C2,
\begin{equation}\label{dtointegral}
\begin{split}
d^2\left(\hat{\Lambda}_n,\Lambda_n\right)
=&\int_{\tau_0}^{\tau_1}\left|\hat{\Lambda}_n(t)-\Lambda_n(t)\right|^2f_U(t)dt
+\int_{\tau_0}^{\tau_1}\left|\hat{\Lambda}_n(t)-\Lambda_n(t)\right|^2f_V(t)dt\\
&>\int_{\tau_0}^{\tau_1}\left|\hat{\Lambda}_n(t)-\Lambda_n(t)\right|^2dt.
\end{split}
\end{equation}

Let $t^\ast$ satisfy $\left|\hat{\Lambda}_n(t^\ast)-\Lambda_n(t^\ast)\right|=\left\|\hat{\Lambda}_n-\Lambda_n\right\|_\infty\equiv\xi$ with $\tau_0\le t^\ast \le \tau_1$. In addition, without loss of generality, assume $\hat{\Lambda}_n(t^\ast)-\Lambda_n(t^\ast)>0$ and $t^\ast <\tau_1$. By both $\hat{\Lambda}_n$ and $\Lambda_n$ belonging to $\mathcal{F}_n$, for $\mathcal{F}_n$ described in the proof of Theorem~1, we know that
$\left|\frac{\partial \left\{\hat{\Lambda}_n(t)-\Lambda_n(t)\right\}}{\partial t}\right|<2M$ on $[\tau_0,\tau_1]$. Then for any $s$ with $s>0$ and $t^\ast+s\le\tau_1$, we have
\begin{align*}
\left\{\hat{\Lambda}_n(t^\ast+s)-\Lambda_n(t^\ast+s)\right\}
-\left\{\hat{\Lambda}_n(t^\ast)-\Lambda_n(t^\ast)\right\}
\ge -2Ms.
\end{align*}
Hence, if $0\le s\le c_0\xi\le\frac{\xi}{4M}$ for some $c_0$ with $c_0>0$ and $t^\ast+c_0\xi\le \tau_1$, we have
\begin{align*}
\left\{\hat{\Lambda}_n(t^\ast+s)-\Lambda_n(t^\ast+s)\right\}\ge\xi-2Ms\ge\xi-2Mc_0\xi\ge\xi/2.
\end{align*}
Then
\begin{align*}
\int_{\tau_0}^{\tau_1}\left|\hat{\Lambda}_n(t)-\Lambda_n(t)\right|^2dt
\ge\int_{t^\ast}^{t^\ast+c_0\xi}\left|\hat{\Lambda}_n(t)-\Lambda_n(t)\right|^2dt\ge\frac{c_0}{4}\xi^3,
\end{align*}
which implies
\begin{align}\label{l2toinfty}
\left\|\hat{\Lambda}_n-\Lambda_n\right\|_\infty\le c d^{\frac{2}{3}}\left(\hat{\Lambda}_n,\Lambda_n\right),
\end{align}
by (\ref{dtointegral}). 
If $r_0>0$, by (\ref{dishatlambda}) we have $$d\left(\hat{\Lambda}_n,\Lambda_n\right)= o_P\left(n^{-\frac{p+r-r_0}{2(p+r)+1}}\right).$$ Then for any $\epsilon>0$, we have
$
\Pr\left(d\left(\hat{\Lambda}_n,\Lambda_n\right)<\epsilon n^{-\frac{p+r-r_0}{2(p+r)+1}}\right)\to1.
$
Hence, by (\ref{l2toinfty}) we have
\begin{align}\label{hatlambdato1}
 \Pr\left(\left\|\hat{\Lambda}_n-\Lambda_n\right\|_\infty<c n^{-\frac{2(p+r-r_0)/3}{2(p+r)+1}}\right)\to1
\end{align}
Assume $\hat{\Lambda}_n-\Lambda_n=\sum_{i=1}^{p_n}\alpha_iB_i^l$ with $p_n=O\left(n^{-\frac{1}{2(p+r)+1}}\right)$, then by~\citet{schumaker:2007} or~\citet{wu:thesis:2010}, its derivative
\begin{align}\label{hatlambdalambda}
\hat{\lambda}_n-\lambda_n=\sum_{i=1}^{p_n-1}\frac{(l-1)(\alpha_{i+1}-\alpha_i)}{u_{i+l}-u_i}B_{i+1}^{l-1},
\end{align}
where $\{B_i^l\}_{i=1}^{p_n}$ and $\{B_{i+1}^{l-1}\}_{i=1}^{p_n-1}$ are the B-spline basis functions of order $l$ and order $l-1$, respectively, $\{u_1,\cdots, u_{p_n+l}\}$ are the knot sequence for $\{B_i^l\}_{i=1}^{p_n}$ including the interior knots $\{u_i\}_{i=l+1}^{p_n}$.

If $\left\|\hat{\Lambda}_n-\Lambda_n\right\|_\infty<c n^{-\frac{2(p+r-r_0)/3}{2(p+r)+1}}$, from the fact that the B-spline basis functions are uniformly bounded and strongly localized~\citep{deboor:01}, it can be shown that
\begin{align*}
\alpha_i\le c n^{-\frac{2(p+r-r_0)/3}{2(p+r)+1}},
\end{align*}
for $i\in\{1,\cdots,q_n\}$.

On the other hand, we know that $\max_{i\in\{l,\cdots,p_n\}}(u_{i+1}-u_i)\ge cn^{-\frac{1}{2(p+r)+1}}$, since the number of subintervals in the B-spline knot sequence is $O\left(n^{\frac{1}{2(p+r)+1}}\right)$.
In addition, it is true that $\frac{\max_{i\in\{l,\cdots,p_n\}}(u_{i+1}-u_i)}{\min_{i\in\{l,\cdots,p_n\}}(u_{i+1}-u_i)}\le b_{knot}$ for a constant $b_{knot}>0$, from the fact that the B-spline knot sequence has interior knots placed uniformly or nearly uniformly. Then we have
$$u_{i+l}-u_i\ge\min_{i\in\{l,\cdots,p_n\}}(u_{i+1}-u_i)\ge cn^{-\frac{1}{2(p+r)+1}}.$$
Hence, if $2(p+r-r_0)/3-1>0$, that is, $r_0<(p+r)-3/2$, by the two preceding displays, for $i\in\{1,\cdots,p_n-1\}$ we have
\begin{align*}
\frac{(l-1)(\alpha_{i+1}-\alpha_i)}{u_{i+l}-u_i} \le cn^{-\frac{2(p+r-r_0)/3-1}{2(p+r)+1}}\to0.
\end{align*}
Then, (\ref{hatlambdalambda}) results in $\left\|\hat{\lambda}_n-\lambda_n\right\|_\infty\to0$. Hence, by $\left\|\lambda_n-\lambda_0\right\|_\infty=O\left(n^{-\frac{p+r}{2(p+r)+1}}\right)$, we have
$\left\|\hat{\lambda}_n-\lambda_0\right\|_\infty\to0$. Also by $b_l<\lambda_0<b_u$, for sufficiently large $n$, it is true that $$b_l/2<\hat{\lambda}_n<3b_u/2.$$

We have just established that,
$\left\|\hat{\Lambda}_n-\Lambda_n\right\|_\infty<c n^{-\frac{2(p+r-r_0)/3}{2(p+r)+1}}$ implies $b_l/2<\hat{\lambda}_n<3b_u/2$, for $r_0=\frac{(p+r)-3/2}{2}<(p+r)-3/2$, since $(p+r)-3/2>0$ by condition C1. In addition, (\ref{hatlambdato1}) holds by $r_0=\frac{(p+r)-3/2}{2}>0$. Hence, (\ref{b1b2}) holds.


In what follows, we complete the whole proof by showing
\begin{align}\label{lambdatau}
\Pr\left(\hat{\Lambda}_n(\tau_0)<(2M/3)\tau_0\right)\to1.
\end{align}
By (\ref{hatlambdato1}), that is,
$
 \Pr\left(\left\|\hat{\Lambda}_n-\Lambda_0\right\|_\infty<c n^{-\frac{2(p+r-r_0)/3}{2(p+r)+1}}\right)\to1,
$
we can show that for any $\epsilon>0$, it is true that
\begin{align*}
\Pr\left(\hat{\Lambda}_n(\tau_0)<\Lambda_0(\tau_0)+\epsilon\right)\to1.
\end{align*}
On the other hand, by $\Lambda_0(\tau_0)<b_u\tau_0$ and $M=2b_u$, we can choose a sufficiently small $\epsilon$ to satisfy
$$
\Pr\left(\hat{\Lambda}_n(\tau_0)<\Lambda_0(\tau_0)+\epsilon\right)<\Pr\left(\hat{\Lambda}_n(\tau_0)<(2M/3)\tau_0\right).
$$
Hence, (\ref{lambdatau}) holds.~~~$\square$
\subsection*{Lemma~2 and its proof}
\begin{lem}\label{lemma2}
   Under conditions C1--C4, $f_{ln}$ and $f_{rn}$ both converge in distribution to the $\chi^2$ distribution with $K$ degrees of freedom, 
   where $f_{ln}$ and $f_{rn}$ are defined by~(\ref{flnformula}) and~(\ref{frnformula}) in the proof of Theorem~1, respectively.
\end{lem}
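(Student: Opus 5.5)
The plan is to show that both $f_{ln}$ and $f_{rn}$ equal the quadratic form $n(\hat{\boldsymbol{\beta}}_n-\boldsymbol{\beta}_0)^\top I(\boldsymbol{\beta}_0)(\hat{\boldsymbol{\beta}}_n-\boldsymbol{\beta}_0)+o_P(1)$. By (\ref{normality}), $\sqrt{n}(\hat{\boldsymbol{\beta}}_n-\boldsymbol{\beta}_0)\to_d N(0,I^{-1}(\boldsymbol{\beta}_0))$, so this quadratic form converges to $\chi^2_K$. Slutsky's lemma then gives the claim for both statistics.

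\textbf{Step 1: derivatives along the path.} First I compute the derivatives of $t\mapsto l(\boldsymbol{\omega}_{\boldsymbol{t}}(\boldsymbol{\beta},\Lambda,\boldsymbol{h}_n);\boldsymbol{x})$. Since $\Lambda_{\boldsymbol{t}}=\Lambda+(\boldsymbol{\beta}-\boldsymbol{t})^\top\boldsymbol{h}_n$, the chain rule gives
\[
\partial_{\boldsymbol{t}} l = l_{\boldsymbol{\beta}}(\boldsymbol{t},\Lambda_{\boldsymbol{t}};)-l_\Lambda(\boldsymbol{t},\Lambda_{\boldsymbol{t}};)\boldsymbol{h}_n .
\]
The second derivative $\partial^2_{\boldsymbol{t}\boldsymbol{t}^\top}l$ is the corresponding combination of $l_{\boldsymbol{\beta}\boldsymbol{\beta}}$, $l_{\boldsymbol{\beta}\Lambda}\boldsymbol{h}_n$ and $l_{\Lambda\Lambda}[\boldsymbol{h}_n,\boldsymbol{h}_n]$. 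All of these are explicit smooth functions of $(\boldsymbol{t},\Lambda(u),\Lambda(v),\boldsymbol{h}_n(u),\boldsymbol{h}_n(v),\boldsymbol{z})$.

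\textbf{Step 2: uniform control of the second-derivative term.} By C2--C3 and Lemma~1, $\Lambda(v)-\Lambda(u)\ge b_1\eta>0$ and all arguments stay in a compact set, so the second derivatives are uniformly bounded and Lipschitz in their arguments. The classes indexed by $\boldsymbol{t}$ near $\boldsymbol{\beta}_0$, $\Lambda\in\mathcal{F}_n$ and $\boldsymbol{h}_n$ are Glivenko--Cantelli, because their monotone/bounded-variation envelopes have controlled bracketing entropy. Together with $\tilde{\boldsymbol{\beta}}_n,\bar{\boldsymbol{\beta}}_n\to\boldsymbol{\beta}_0$, $\|\hat\Lambda_n-\Lambda_0\|_\infty,\|\hat\Lambda_{n0}-\Lambda_0\|_\infty\to_P0$ (from the proof of Lemma~1) and $\|\boldsymbol{h}_n-\boldsymbol{h}_\ast\|_\infty\to0$, this gives
\[
\mathbb{P}_n\partial^2_{\boldsymbol{t}\boldsymbol{t}^\top}l\to_P E\,\partial^2_{\boldsymbol{t}\boldsymbol{t}^\top}l(\boldsymbol{\omega}_{\boldsymbol{t}}(\boldsymbol{\beta}_0,\Lambda_0,\boldsymbol{h}_\ast))|_{\boldsymbol{t}=\boldsymbol{\beta}_0}.
\]
By the information identity along a least favorable submodel, this limit equals $-I(\boldsymbol{\beta}_0)$. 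Hence the quadratic term in $f_{rn}$ is $n(\hat{\boldsymbol{\beta}}_n-\boldsymbol{\beta}_0)^\top I(\boldsymbol{\beta}_0)(\hat{\boldsymbol{\beta}}_n-\boldsymbol{\beta}_0)+o_P(1)$, and the quadratic term in $f_{ln}$ equals $-1$ times the same form up to $o_P(1)$.

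\textbf{Step 3: the linear term of $f_{ln}$ (the main obstacle).} I need
\[
\sqrt{n}\,\mathbb{P}_n\{l_{\boldsymbol{\beta}}(\boldsymbol{\beta}_0,\hat\Lambda_{n0};)-l_\Lambda(\boldsymbol{\beta}_0,\hat\Lambda_{n0};)\boldsymbol{h}_n\}=\sqrt{n}\,\mathbb{P}_n\tilde{l}(\boldsymbol{\beta}_0,\Lambda_0;)+o_P(1).
\]
I split the difference into three pieces:
\begin{itemize}
\item[(a)] $\sqrt{n}(\mathbb{P}_n-P)$ applied to the difference of the scores at $(\hat\Lambda_{n0},\boldsymbol{h}_n)$ and at $(\Lambda_0,\boldsymbol{h}_\ast)$. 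This is $o_P(1)$ by a stochastic equicontinuity (Donsker plus $L_2$-consistency) argument, using the rate $d(\hat\Lambda_{n0},\Lambda_0)=O_P(n^{-(p+r)/(2(p+r)+1)})$.
\item[(b)] $\sqrt{n}P\{l_\Lambda(\boldsymbol{\beta}_0,\Lambda_0;)(\boldsymbol{h}_n-\boldsymbol{h}_\ast)\}$. This is zero, because scores have mean zero.
\item[(c)] $\sqrt{n}P\{\tilde l(\boldsymbol{\beta}_0,\hat\Lambda_{n0})-\tilde l(\boldsymbol{\beta}_0,\Lambda_0)\}$. Its first-order part in $\hat\Lambda_{n0}-\Lambda_0$ vanishes by the orthogonality defining $\boldsymbol{h}_\ast$ in (\ref{hstar}). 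The remainder is bounded by $c\sqrt{n}\,d^2(\hat\Lambda_{n0},\Lambda_0)$ plus $c\sqrt{n}\|\boldsymbol{h}_n-\boldsymbol{h}_\ast\|_\infty d(\hat\Lambda_{n0},\Lambda_0)$. Since $p+r>3/2>1/2$, $2(p+r)/(2(p+r)+1)>1/2$, so the first term is $o_P(1)$. The second term is also $o_P(1)$, given the choice $\nu=1/(2(p+r)+1)$.
\end{itemize}
The delicate point is (c): the first-order part vanishes only at $\Lambda_0$, not at the sieve point. So I will rely on the assumed smoothness of $\boldsymbol{h}_\ast$, which holds under C1--C2, to get the $O(n^{-\nu})$ approximation, and I may need a Jackson bound of higher order for $\boldsymbol{h}_\ast$.

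\textbf{Step 4: conclusion.} Combining, the linear part of $f_{ln}$ is $2n(\hat{\boldsymbol{\beta}}_n-\boldsymbol{\beta}_0)^\top\mathbb{P}_n\tilde l+o_P(1)$. By (\ref{normality}), $\sqrt{n}\,\mathbb{P}_n\tilde l=I(\boldsymbol{\beta}_0)\sqrt{n}(\hat{\boldsymbol{\beta}}_n-\boldsymbol{\beta}_0)+o_P(1)$, so this linear part is $2n(\hat{\boldsymbol{\beta}}_n-\boldsymbol{\beta}_0)^\top I(\hat{\boldsymbol{\beta}}_n-\boldsymbol{\beta}_0)+o_P(1)$. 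Adding the quadratic term from Step 2 leaves
\[
f_{ln}=n(\hat{\boldsymbol{\beta}}_n-\boldsymbol{\beta}_0)^\top I(\boldsymbol{\beta}_0)(\hat{\boldsymbol{\beta}}_n-\boldsymbol{\beta}_0)+o_P(1),
\]
which is the same representation as for $f_{rn}$. The continuous mapping theorem then yields $\chi^2_K$ limits for both $f_{ln}$ and $f_{rn}$.
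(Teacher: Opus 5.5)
Your proposal is correct and follows essentially the same route as the paper. For the linear term you use a Donsker/equicontinuity step plus an $O_P\{d^2(\hat\Lambda_{n0},\Lambda_0)+\|h_{nk}-h_{\ast k}\|_\infty\, d(\hat\Lambda_{n0},\Lambda_0)\}$ bias bound coming from the orthogonality of $\boldsymbol{h}_\ast$; for the curvature terms you use a Glivenko--Cantelli argument plus the information identity, which the paper realizes by splitting the Hessian into $\ddot L/L$ minus the squared score; and both conclude via (\ref{normality}).

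One difference is cosmetic: you Taylor-expand the $P_0$-mean of the path score about $\Lambda_0$, whereas the paper subtracts its zero mean under $P_{\boldsymbol{\beta}_0,\hat\Lambda_{n0}}$. Your hedge about needing a higher-order Jackson bound is unnecessary, because your own exponent count shows the $O(n^{-\nu})$ approximation with $\nu=1/\{2(p+r)+1\}$ already makes the cross term $o_P(n^{-1/2})$.
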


{\em Proof.}
First, we show that
\begin{equation}\label{linearterm}
\sqrt{n}\mathbb{P}_n\left\{\left.\frac{\partial l\left(\boldsymbol{\omega}_{\boldsymbol{t}}\left(\boldsymbol{\beta}_0,\hat{\Lambda}_{n0},\boldsymbol{h}_n\right);\right)}{\partial {\boldsymbol{t}}}\right\vert_{{\boldsymbol{t}}=\boldsymbol{\beta}_0}
-\tilde{l}\left(\boldsymbol{\beta}_0,\Lambda_0;\right)\right\}\to_P{\boldsymbol{0}}.
\end{equation}

Let $\mathcal{F}=\{F:F\le M\tau_1\}$ for $M$ described in Lemma~1, for each $F\in\mathcal{F}$ restricted on $[\tau_0,\tau_1]$ is nonnegative and nondecreasing. $\boldsymbol{\mathcal{H}}=\left\{\boldsymbol{h}_n\right\}_{n>N_0}$ is the B-spline function sequence approximating $\boldsymbol{h}_\ast$ (\ref{hstar}), described in the proof of Theorem~1.

Denote
\begin{align*}
&\boldsymbol{\mathcal{L}}_1=\left\{\left.\frac{\partial l\left(\boldsymbol{\omega}_{\boldsymbol{t}}\left(\boldsymbol{\beta},\Lambda,\boldsymbol{g}\right);\right)}{\partial {\boldsymbol{t}}}\right\vert_{{\boldsymbol{t}}={\boldsymbol{t}}_0}:\right.\\
&\left.~~~~~~~~~~~~~~~~~~~~{\boldsymbol{t}}_0\in\boldsymbol{\mathcal{B}}, \Lambda_{{\boldsymbol{t}}_0}(\boldsymbol{\beta},\Lambda,\boldsymbol{g})\in\mathcal{F}, \boldsymbol{g}=(g_1,\cdots,g_K)^\top, \boldsymbol{g}\in\boldsymbol{\mathcal{H}}\right\},
\end{align*}
where $\boldsymbol{\mathcal{B}}$ is compact set in $\mathbb{R}^k$ and includes $\boldsymbol{\beta}_0$ in its interior. Since $\|h_{nk}-h_{\ast k}\|_\infty=O\left(n^{-\nu}\right)$, For any small $\epsilon>0$, there exists an $N$, such that $\|h_{nk}-h_{\ast k}\|_\infty<\epsilon/3$, for $n>N$ and $k\in\{1,\cdots,K\}$. Let $h_{low}=h_{\ast k}-\epsilon/3$ and $h_{high}=h_{\ast k}+\epsilon/3$. Then for any $n>N$ and $y$ with $\tau_0\le y \le \tau_1$, we have
$$h_{nk}(y)-\left(h_{\ast k}(y)-\epsilon/3\right)
\ge-\|h_{nk}-h_{\ast k}\|_\infty+\epsilon/3>0
$$
and
$$h_{nk}(y)-\left(h_{\ast k}(y)+\epsilon/3\right)
\le|h_{nk}-h_{\ast k}\|_\infty-\epsilon/3<0.
$$
In addition, $\|h_{high}-h_{low}\|_\infty=2\epsilon/3<\epsilon$. We know that the $\epsilon$-bracketing number for $\boldsymbol{\mathcal{H}}$ is less than $c(N+1)^K$.

Also by evaluating the $\epsilon$-covering number for $\boldsymbol{\mathcal{B}}$ and the $\epsilon$-bracketing number for $\mathcal{F}$. We can obtain the $\epsilon$-bracketing number of $\boldsymbol{\mathcal{L}}_1$ as done in the proof of Theorem 1 in~\citet{wu:xu:19}, which implies that
$\boldsymbol{\mathcal{L}}_1$ is a Donsker class.

Since $\boldsymbol{\beta}_0\in\boldsymbol{\mathcal{B}}$,  $\Lambda_{\boldsymbol{\beta}_0}\left(\boldsymbol{\beta}_0,\hat{\Lambda}_{n0},\boldsymbol{h}_n\right)=\hat{\Lambda}_{n0}\in\mathcal{F}_n\subset\mathcal{F}$ and $\boldsymbol{h}_n\in\boldsymbol{\mathcal{H}}$. Then, it is true that
$\left.
\frac{\partial l\left(\boldsymbol{\omega}_{\boldsymbol{t}}\left(\boldsymbol{\beta}_0,\hat{\Lambda}_{n0},\boldsymbol{h}_n\right);\right)}
{\partial {\boldsymbol{t}}}
\right\vert_{{\boldsymbol{t}}=\boldsymbol{\beta}_0}
\in\boldsymbol{\mathcal{L}}_1$. It can be easily derived that
\begin{align*}
\left.\frac{\partial l\left(\boldsymbol{\omega}_{\boldsymbol{t}}\left(\boldsymbol{\beta}_0,\hat{\Lambda}_{n0},\boldsymbol{h}_n\right);\right)}{\partial {\boldsymbol{t}}}\right\vert_{{\boldsymbol{t}}=\boldsymbol{\beta}_0}
=l_{\boldsymbol{\beta}}(\boldsymbol{\beta}_0,\hat{\Lambda}_{n0};)-l_\Lambda(\boldsymbol{\beta}_0,\hat{\Lambda}_{n0};)\boldsymbol{h}_n.
\end{align*}
Denote $P_0$ as the probability measure for the true data distribution. By the preceding display and (\ref{efficientscore}), We have
\begin{align*}
&P_0\left\|\left.\frac{\partial l\left(\boldsymbol{\omega}_{\boldsymbol{t}}\left(\boldsymbol{\beta}_0,\hat{\Lambda}_{n0},\boldsymbol{h}_n\right);\right)}{\partial {\boldsymbol{t}}}\right\vert_{{\boldsymbol{t}}=\boldsymbol{\beta}_0}
-\tilde{l}\left(\boldsymbol{\beta}_0,\Lambda_0;\right)\right\|^2\\
&~~~~~~~~~~~~=P_0\left\|l_{\boldsymbol{\beta}}(\boldsymbol{\beta}_0,\hat{\Lambda}_{n0};)-l_\Lambda(\boldsymbol{\beta}_0,\hat{\Lambda}_{n0};)\boldsymbol{h}_n
-l_{\boldsymbol{\beta}}(\boldsymbol{\beta}_0,\Lambda_0;)
+l_\Lambda(\boldsymbol{\beta}_0,\Lambda_0;)\boldsymbol{h}_\ast\right\|^2\\
&~~~~~~~~~~~~\le cd^2\left(\hat{\Lambda}_{n0}, \Lambda_0\right)+c\sum_{k=1}^K\|h_{nk}-h_{\ast k}\|_\infty^2\to_P0.
\end{align*}
Hence, by Corollary 2.3.12 in~\citet{van:wellner:96}, we have
\begin{equation}\label{corollary2.3.12}
\sqrt{n}\left(\mathbb{P}_n-P_0\right)\left\{\left.\frac{\partial l\left(\boldsymbol{\omega}_{\boldsymbol{t}}\left(\boldsymbol{\beta}_0,\hat{\Lambda}_{n0},\boldsymbol{h}_n\right);\right)}{\partial {\boldsymbol{t}}}\right\vert_{{\boldsymbol{t}}=\boldsymbol{\beta}_0}
-\tilde{l}\left(\boldsymbol{\beta}_0,\Lambda_0;\right)\right\}\to_P{\boldsymbol{0}}.
\end{equation}

Next, we know that
\begin{equation}\label{p0firstderivative}
\begin{split}
&P_0\left\{\left.\frac{\partial l\left(\boldsymbol{\omega}_{\boldsymbol{t}}\left(\boldsymbol{\beta}_0,\hat{\Lambda}_{n0},\boldsymbol{h}_n\right);\right)}{\partial {\boldsymbol{t}}}\right\vert_{{\boldsymbol{t}}=\boldsymbol{\beta}_0}\right\}\\
&~~~~~~~~~~~~=\left(P_0-P_{\boldsymbol{\beta}_0,\hat{\Lambda}_{n0}}\right)\tilde{l}(\boldsymbol{\beta}_0,\Lambda_0;)\\
&~~~~~~~~~~~~~~~~+\left(P_0-P_{\boldsymbol{\beta}_0,\hat{\Lambda}_{n0}}\right)
\left\{\left.\frac{\partial l\left(\boldsymbol{\omega}_{\boldsymbol{t}}\left(\boldsymbol{\beta}_0,\hat{\Lambda}_{n0},\boldsymbol{h}_n\right);\right)}{\partial {\boldsymbol{t}}}\right\vert_{{\boldsymbol{t}}=\boldsymbol{\beta}_0}-\tilde{l}(\boldsymbol{\beta}_0,\Lambda_0;)\right\},
\end{split}
\end{equation}
where $P_{\boldsymbol{\beta}_0,\hat{\Lambda}_{n0}}$ is a valid probability measure if $b_1<\hat{\lambda}_{n0}<b_2$ with $\hat{\Lambda}_{n0}=\int\hat{\lambda}_{n0}$.
The preceding display (\ref{p0firstderivative}) is due to
\begin{align*}
P_{\boldsymbol{\beta}_0,\hat{\Lambda}_{n0}}\left\{\left.\frac{\partial l\left(\boldsymbol{\omega}_{\boldsymbol{t}}\left(\boldsymbol{\beta}_0,\hat{\Lambda}_{n0},\boldsymbol{h}_n\right);\right)}{\partial {\boldsymbol{t}}}\right\vert_{{\boldsymbol{t}}=\boldsymbol{\beta}_0}\right\}=
&\int\left\{\left.\frac{\partial L\left(\boldsymbol{\omega}_{\boldsymbol{t}}\left(\boldsymbol{\beta}_0,\hat{\Lambda}_{n0},\boldsymbol{h}_n\right);\right) }{\partial {\boldsymbol{t}}}\right\vert_{{\boldsymbol{t}}=\boldsymbol{\beta}_0}\right\}\\
=&\left.\frac{\partial \int p_{{\boldsymbol{t}}, \hat{\Lambda}_{n0}+(\boldsymbol{\beta}_0-{\boldsymbol{t}})^\top\boldsymbol{h}_n} }{\partial {\boldsymbol{t}}}\right\vert_{{\boldsymbol{t}}=\boldsymbol{\beta}_0}={\boldsymbol{0}},
\end{align*}
where $p_{{\boldsymbol{t}}, \hat{\Lambda}_{n0}+(\boldsymbol{\beta}_0-{\boldsymbol{t}})^\top\boldsymbol{h}_n}$ is the density of the probability measure $P_{{\boldsymbol{t}}, \hat{\Lambda}_{n0}+(\boldsymbol{\beta}_0-{\boldsymbol{t}})^\top\boldsymbol{h}_n}$, when ${\boldsymbol{t}}$ is in a sufficiently small neighborhood of $\boldsymbol{\beta}_0$, $\boldsymbol{h}_n$ has uniformly bounded derivatives for sufficiently large $n$s,  and $b_1<\hat{\lambda}_{n0}<b_2$ for $b_1$ and $b_2$ described in Lemma~\ref{lemma1}. Then it is obvious that $\int p_{{\boldsymbol{t}}, \hat{\Lambda}_{n0}+(\boldsymbol{\beta}_0-{\boldsymbol{t}})^\top\boldsymbol{h}_n} \equiv 1$.

For the $k$th component of the first term on the right hand side in (\ref{p0firstderivative}), we have
\begin{align*}
&\left|\left(P_0-P_{\boldsymbol{\beta}_0,\hat{\Lambda}_{n0}}\right)\tilde{l}_k(\boldsymbol{\beta}_0,\Lambda_0;)\right|\\
&~~~~~~~~~~~~=\left|P_0\tilde{l}_k(\boldsymbol{\beta}_0,\Lambda_0;)\left\{\left(p_0-p_{\boldsymbol{\beta}_0,\hat{\Lambda}_{n0}}\right)\bigg/p_0
+l_\Lambda\left(\boldsymbol{\beta}_0,\Lambda_0;\right)\left(\hat{\Lambda}_{n0}-\Lambda_0\right)\right\}\right|\\
&~~~~~~~~~~~~=\left|P_0\tilde{l}_k(\boldsymbol{\beta}_0,\Lambda_0;)\left[\left\{p_0-p_{\boldsymbol{\beta}_0,\hat{\Lambda}_{n0}}+\frac{\partial p_0}{\partial \Lambda}\left(\hat{\Lambda}_{n0}-\Lambda_0\right)\right\}\bigg/p_0
\right]\right|\\
&~~~~~~~~~~~~=\left|P_0\tilde{l}_k(\boldsymbol{\beta}_0,\Lambda_0;)\left[\left\{-\frac{1}{2}\frac{\partial^2 p_{\beta_0, \Lambda_0+\tilde{t}\left(\hat{\Lambda}_{n0}-\Lambda_0\right)}}{\partial \Lambda^2}\left(\hat{\Lambda}_{n0}-\Lambda_0\right)^2\right\}\bigg/p_0
\right]\right|\\
&~~~~~~~~~~~~\le c d^2\left(\hat{\Lambda}_{n0},\Lambda_0\right)
\end{align*}
where the first equality in the preceding display is due to
$P_0\tilde{l}_k(\boldsymbol{\beta}_0,\Lambda_0;)l_{\Lambda}(\boldsymbol{\beta}_0,\Lambda_0;)\left(\Lambda_0-\hat{\Lambda}_{n0}\right)=0$ by Theorem 11.1 in \citet{vander:98}, $p_0$ and $p_{\boldsymbol{\beta}_0,\hat{\Lambda}_{n0}}$ are densities for $P_0$ and $P_{\boldsymbol{\beta}_0,\hat{\Lambda}_{n0}}$; the third equality is based on the Taylor's expansion for some $\tilde{t}$ between $0$ and $1$. Note that $\Lambda_0+\tilde{t}\left(\hat{\Lambda}_{n0}-\Lambda_0\right)$ is a valid cumulative hazard function, which implies that $p_{\beta_0, \Lambda_0+\tilde{t}\left(\hat{\Lambda}_{n0}-\Lambda_0\right)}$ is a valid density.

Besides, for the $k$th component of the second term on the right hand side in (\ref{p0firstderivative}), we have
\begin{align*}
&\left|\left(P_0-P_{\boldsymbol{\beta}_0,\hat{\Lambda}_{n0}}\right)\left\{\left.\frac{\partial l\left(\boldsymbol{\omega}_{\boldsymbol{t}}\left(\boldsymbol{\beta}_0,\hat{\Lambda}_{n0},\boldsymbol{h}_n\right);\right)}{\partial {\boldsymbol{t}}}\right\vert_{{\boldsymbol{t}}=\boldsymbol{\beta}_0}-\tilde{l}(\boldsymbol{\beta}_0,\Lambda_0;)\right\}_k\right|\\
&~~~~~~~~~~~~=\left|P_0\left\{\left.\frac{\partial l\left(\boldsymbol{\omega}_{\boldsymbol{t}}\left(\boldsymbol{\beta}_0,\hat{\Lambda}_{n0},\boldsymbol{h}_n\right);\right)}{\partial {\boldsymbol{t}}}\right\vert_{{\boldsymbol{t}}=\boldsymbol{\beta}_0}-\tilde{l}(\boldsymbol{\beta}_0,\Lambda_0;)\right\}_k
\left\{\left(p_0-p_{\boldsymbol{\beta}_0,\hat{\Lambda}_{n0}}\right)/p_0\right\}\right|\\
&~~~~~~~~~~~~\le\left[P_0\left\{\left.\frac{\partial l\left(\boldsymbol{\omega}_{\boldsymbol{t}}\left(\boldsymbol{\beta}_0,\hat{\Lambda}_{n0},\boldsymbol{h}_n\right);\right)}{\partial {\boldsymbol{t}}}\right\vert_{{\boldsymbol{t}}=\boldsymbol{\beta}_0}-\tilde{l}(\boldsymbol{\beta}_0,\Lambda_0;)\right\}_k^2
P_0\left\{\left(p_0-p_{\boldsymbol{\beta}_0,\hat{\Lambda}_{n0}}\right)/p_0\right\}^2\right]^{1/2}\\
&~~~~~~~~~~~~\le \left\{c d\left(\hat{\Lambda}_{n0},\Lambda_0\right)+ c\|h_{nk}-h_{\ast k}\|_\infty\right\}c d\left(\hat{\Lambda}_{n0},\Lambda_0\right)\\
\end{align*}

The preceding two displays imply that, if  $b_1<\hat{\lambda}_{n0}<b_2$,
\begin{align*}
\left|P_0\left\{\left.\frac{\partial l\left(\boldsymbol{\omega}_{\boldsymbol{t}}\left(\boldsymbol{\beta}_0,\hat{\Lambda}_{n0},\boldsymbol{h}_n\right);\right)}{\partial {\boldsymbol{t}}}\right\vert_{{\boldsymbol{t}}=\boldsymbol{\beta}_0}\right\}_k\right|\le c d^2\left(\hat{\Lambda}_{n0},\Lambda_0\right)+c\|h_{nk}-h_{\ast k}\|_\infty d\left(\hat{\Lambda}_{n0},\Lambda_0\right),
\end{align*}
for sufficiently large $n$.
In addition,
\begin{align*}
c d^2\left(\hat{\Lambda}_{n0},\Lambda_0\right)=
O_p\left(n^{-\frac{2(p+r)}{2(p+r)+1}}\right)=o_P\left(n^{-1/2}\right)
\end{align*}
\begin{align*}
 c\|h_{nk}-h_{\ast k}\|_\infty d\left(\hat{\Lambda}_{n0},\Lambda_0\right)
=&O\left(n^{-\frac{1}{2(p+r)+1}}\right)O_p\left(n^{-\frac{p+r}{2(p+r)+1}}\right)\\
=&O_P\left(n^{-\frac{(p+r)+1}{2(p+r)+1}}\right)=o_P\left(n^{-1/2}\right).
\end{align*}
Then for any small positive $\epsilon$, we have
\begin{equation}\label{epsilon12}
\begin{split}
&\Pr\left[\left|\left.\sqrt{n}P_0\left\{\frac{\partial l\left(\boldsymbol{\omega}_{\boldsymbol{t}}\left(\boldsymbol{\beta}_0,\hat{\Lambda}_{n0},\boldsymbol{h}_n\right);\right)}{\partial {\boldsymbol{t}}}\right\vert_{{\boldsymbol{t}}=\boldsymbol{\beta}_0}\right\}_k\right|>\epsilon,
b_1<\hat{\lambda}_{n0}<b_2
\right]\\
&~~~~~~~~~~~~~~~~\le\Pr\left[\sqrt{n}\left\{c d^2\left(\hat{\Lambda}_{n0},\Lambda_0\right)+c\|h_{nk}-h_{\ast k}\|_\infty d\left(\hat{\Lambda}_{n0},\Lambda_0\right)\right\}>\epsilon\right]
\to0.
\end{split}
\end{equation}
We also know that,
$\Pr\left(b_1<\hat{\lambda}_{n0}<b_2\right)\to1$ by Lemma~\ref{lemma1}. Hence, (\ref{epsilon12}) implies that,
\begin{align*}
 \Pr\left[\left|\sqrt{n}P_0\left\{\left.\frac{\partial l\left(\boldsymbol{\omega}_{\boldsymbol{t}}\left(\boldsymbol{\beta}_0,\hat{\Lambda}_{n0},\boldsymbol{h}_n\right);\right)}{\partial {\boldsymbol{t}}}\right\vert_{{\boldsymbol{t}}=\boldsymbol{\beta}_0}\right\}_k\right|>\epsilon
\right]\to0.
\end{align*}
Therefore,
\begin{align}\label{npconvergeto0}
\sqrt{n}P_0\left\{\left.\frac{\partial l\left(\boldsymbol{\omega}_{\boldsymbol{t}}\left(\boldsymbol{\beta}_0,\hat{\Lambda}_{n0},\boldsymbol{h}_n\right);\right)}{\partial {\boldsymbol{t}}}\right\vert_{{\boldsymbol{t}}=\boldsymbol{\beta}_0}\right\}\to_P{\boldsymbol{0}}.
\end{align}
Note that we have showed that the condition $b_1<\hat{\lambda}_{n0}<b_2$ is not required for the preceding convergence (\ref{npconvergeto0}).

In addition, since $\lambda_0$ has a positive lower bound and the derivative of $\boldsymbol{h}_\ast$ is bounded (by Theorem 4.1 in~\cite{huang:wellner:1997}), we can prove that the expectation of the score operator~(\ref{score_op}) for $h=h_{\ast k}$ (one component of $\boldsymbol{h}_\ast$ defined by (\ref{hstar})), evaluated at the true parameter, is $0$. Specifically,
\begin{align*}
P_0\left\{l_\Lambda\left(\boldsymbol{\beta}_0,\Lambda_0;\right)h_{\ast k}\right\}
&=P_0\left\{\left.\frac{\partial}{\partial s}l\left(\boldsymbol{\beta}_0,\Lambda_0+sh_{\ast k};\right)\right\vert_{s=0}\right\}
=P_0\left\{\frac{\left.\frac{\partial}{\partial s}p_{\boldsymbol{\beta}_0,\Lambda_0+sh_{\ast k}}\right\vert_{s=0}}
{p_{\boldsymbol{\beta}_0,\Lambda_0}}\right\}\\
&=P_0\left\{\frac{\lim_{w\to0}\frac{p_{\boldsymbol{\beta}_0,\Lambda_0+wh_{\ast k}}- p_{\boldsymbol{\beta}_0,\Lambda_0}}{w}}
{p_{\boldsymbol{\beta}_0,\Lambda_0}}\right\}
=\lim_{w\to0}\frac{\int p_{\boldsymbol{\beta}_0,\Lambda_0+wh_{\ast k}}-\int p_{\boldsymbol{\beta}_0,\Lambda_0}}{w}\\
&=0,
\end{align*}
where being equal to 0 is due to the fact that $p_{\boldsymbol{\beta}_0,\Lambda_0+wh_{\ast k}}$ represents a valid density and hence $\int p_{\boldsymbol{\beta}_0,\Lambda_0+wh_{\ast k}}\equiv1$, given that $w$ is in a sufficiently small neighborhood of $0$. Then it is true that for the efficient score~(\ref{efficientscore}),
$P_0\tilde{l}(\boldsymbol{\beta}_0,\Lambda_0;)={\boldsymbol{0}}$.
Hence, by (\ref{corollary2.3.12}) and (\ref{npconvergeto0}) it is easy to see that (\ref{linearterm}) holds.

Next, we establish that
\begin{equation}\label{quadraticterm}
-\mathbb{P}_n\left\{\left.\frac{\partial^2 l\left(\boldsymbol{\omega}_{\boldsymbol{t}}\left(\boldsymbol{\beta}_0,\hat{\Lambda}_{n0},\boldsymbol{h}_n\right);\right)}{\partial {\boldsymbol{t}}\partial {\boldsymbol{t}}^\top}\right\vert_{{\boldsymbol{t}}=\bar{\boldsymbol{\beta}}_n}\right\}\to_P P_0\tilde{l}(\boldsymbol{\beta}_0,\Lambda_0;)\tilde{l}(\boldsymbol{\beta}_0,\Lambda_0;)^\top
\end{equation}

Since
\begin{equation}\label{split}
\begin{split}
&\left.\frac{\partial^2 l\left(\boldsymbol{\omega}_{\boldsymbol{t}}\left(\boldsymbol{\beta}_0,\hat{\Lambda}_{n0},\boldsymbol{h}_n\right);\right)}{\partial {\boldsymbol{t}}\partial {\boldsymbol{t}}^\top}\right\vert_{{\boldsymbol{t}}=\bar{\boldsymbol{\beta}}_n}  \\
&~~~~~~~~~~~~=\left.\frac{\frac{\partial^2 L\left(\boldsymbol{\omega}_{\boldsymbol{t}}\left(\boldsymbol{\beta}_0,\hat{\Lambda}_{n0},\boldsymbol{h}_n\right);\right)}{\partial {\boldsymbol{t}}\partial {\boldsymbol{t}}^\top}}
{L\left(\boldsymbol{\omega}_{\boldsymbol{t}}\left(\boldsymbol{\beta}_0,\hat{\Lambda}_{n0},\boldsymbol{h}_n\right);\right)}\right\vert_{{\boldsymbol{t}}=\bar{\boldsymbol{\beta}}_n}
-\left.\left\{\frac{\partial l\left(\boldsymbol{\omega}_{\boldsymbol{t}}\left(\boldsymbol{\beta}_0,\hat{\Lambda}_{n0},\boldsymbol{h}_n\right);\right)}{\partial {\boldsymbol{t}}}\right\}^{\otimes2}\right\vert_{{\boldsymbol{t}}=\bar{\boldsymbol{\beta}}_n},
\end{split}
\end{equation}
where ${\boldsymbol{w}}^{\otimes2}$ is the outer product for column vector ${\boldsymbol{w}}$, that is, ${\boldsymbol{w}}^{\otimes2}={\boldsymbol{w}}{\boldsymbol{w}}^\top$. To prove (\ref{quadraticterm}), we deal with the two terms on the right hand side in the preceding display separately.
Denote
\begin{align*}
&\boldsymbol{\mathcal{L}}_2=\left\{\left.\frac{\frac{\partial^2 L\left(\boldsymbol{\omega}_{\boldsymbol{t}}\left(\boldsymbol{\beta},\Lambda,\boldsymbol{g}\right);\right)}{\partial {\boldsymbol{t}}\partial {\boldsymbol{t}}^\top}}
{L\left(\boldsymbol{\omega}_{\boldsymbol{t}}\left(\boldsymbol{\beta},\Lambda,\boldsymbol{g}\right);\right)}\right\vert_{{\boldsymbol{t}}={\boldsymbol{t}}_0}:\right.\\
&\left.~~~~~~~~~~~~~~~~~~~~{\boldsymbol{t}}_0\in\boldsymbol{\mathcal{B}}, \Lambda_{{\boldsymbol{t}}_0}(\boldsymbol{\beta},\Lambda,\boldsymbol{g})\in\mathcal{F}, \boldsymbol{g}=(g_1,\cdots,g_K)^\top, \boldsymbol{g}\in\boldsymbol{\mathcal{H}}\right\},
\end{align*}
where $\boldsymbol{\mathcal{B}}$, $\mathcal{F}$ and $\boldsymbol{\mathcal{H}}$ are as defined for $\boldsymbol{\mathcal{L}}_1$. We can show that $\boldsymbol{\mathcal{L}}_2$ is a Glivenko-Cantelli class by evaluating its bracketing numbers as did for $\boldsymbol{\mathcal{L}}_1$. That is,
\begin{equation}\label{GK}
\sup_{{\boldsymbol{f}}\in\boldsymbol{\mathcal{L}}_2}\left\|\left(\mathbb{P}_n-P_0\right){\boldsymbol{f}}\right\|\to_P0.
\end{equation}
If $\left\|\hat{\boldsymbol{\beta}}_n-\boldsymbol{\beta}_0\right\|<\epsilon_0$,
which implies $\left\|\bar{\boldsymbol{\beta}}_n-\boldsymbol{\beta}_0\right\|<\epsilon_0$.
Then  $\bar{\boldsymbol{\beta}}_n\in\boldsymbol{\mathcal{B}}$ since $\bar{\boldsymbol{\beta}}_n$ is between $\boldsymbol{\beta}_0$ and $\hat{\boldsymbol{\beta}}_n$.
And if $b_1<\hat{\lambda}_{n0}<b_2$ and $\hat{\Lambda}_{n0}(\tau_0)<(2M/3)\tau_0$ for $b_1$, $b_2$ and $M$ described in Lemma~\ref{lemma1}, we have
$\Lambda_{\bar{\boldsymbol{\beta}}_n}(\boldsymbol{\beta}_0,\hat{\Lambda}_{n0},\boldsymbol{h}_n)\in\mathcal{F}_n\subset\mathcal{F}$ for sufficiently large $n$, as argued in the proof of Theorem~1. Also by $\boldsymbol{h}_n\in\boldsymbol{\mathcal{H}}$, we have $\left.\frac{\frac{\partial^2 L\left(\boldsymbol{\omega}_{\boldsymbol{t}}\left(\boldsymbol{\beta}_0,\hat{\Lambda}_{n0},\boldsymbol{h}_n\right);\right)}{\partial {\boldsymbol{t}}\partial {\boldsymbol{t}}^\top}}
{L\left(\boldsymbol{\omega}_{\boldsymbol{t}}\left(\boldsymbol{\beta}_0,\hat{\Lambda}_{n0},\boldsymbol{h}_n\right);\right)}\right\vert_{{\boldsymbol{t}}=\bar{\boldsymbol{\beta}}_n}\in\boldsymbol{\mathcal{L}}_2$. Then for any small $\epsilon$, we have
\begin{align*}
&\Pr\left[\left\|\left(\mathbb{P}_n-P_0\right)\left\{\left.\frac{\frac{\partial^2 L\left(\boldsymbol{\omega}_{\boldsymbol{t}}\left(\boldsymbol{\beta}_0,\hat{\Lambda}_{n0},\boldsymbol{h}_n\right);\right)}{\partial {\boldsymbol{t}}\partial {\boldsymbol{t}}^\top}}
{L\left(\boldsymbol{\omega}_{\boldsymbol{t}}\left(\boldsymbol{\beta}_0,\hat{\Lambda}_{n0},\boldsymbol{h}_n\right);\right)}\right\vert_{{\boldsymbol{t}}=\bar{\boldsymbol{\beta}}_n}\right\}\right\|>\epsilon,
\left\|\hat{\boldsymbol{\beta}}_n-\boldsymbol{\beta}_0\right\|<\epsilon_0,
b_1<\hat{\lambda}_{n0}<b_2,\right.\\
&~~~~~~~~~~~~~~~~~~~~~~~~~~~~~~~~~~~~~~~~~~~~~~~~~~~~~~~~~~~~~~~~~~~~~~~~~~~~~~~~~~~~~~~~~~~
\left.\hat{\Lambda}_{n0}(\tau_0)<(2M/3)\tau_0
\right]\\
&~~~~\le\Pr\left[\left\|\left(\mathbb{P}_n-P_0\right)\left\{\left.\frac{\frac{\partial^2 L\left(\boldsymbol{\omega}_{\boldsymbol{t}}\left(\boldsymbol{\beta}_0,\hat{\Lambda}_{n0},\boldsymbol{h}_n\right);\right)}{\partial {\boldsymbol{t}}\partial {\boldsymbol{t}}^\top}}
{L\left(\boldsymbol{\omega}_{\boldsymbol{t}}\left(\boldsymbol{\beta}_0,\hat{\Lambda}_{n0},\boldsymbol{h}_n\right);\right)}\right\vert_{{\boldsymbol{t}}=\bar{\boldsymbol{\beta}}_n}\right\}\right\|>\epsilon,\left.\frac{\frac{\partial^2 L\left(\boldsymbol{\omega}_{\boldsymbol{t}}\left(\boldsymbol{\beta}_0,\hat{\Lambda}_{n0},\boldsymbol{h}_n\right);\right)}{\partial {\boldsymbol{t}}\partial {\boldsymbol{t}}^\top}}
{L\left(\boldsymbol{\omega}_{\boldsymbol{t}}\left(\boldsymbol{\beta}_0,\hat{\Lambda}_{n0},\boldsymbol{h}_n\right);\right)}\right\vert_{{\boldsymbol{t}}=\bar{\boldsymbol{\beta}}_n}\in\boldsymbol{\mathcal{L}}_2
\right]\\
&~~~~\le\Pr\left\{\sup_{{\boldsymbol{f}}\in\boldsymbol{\mathcal{L}}_2}\left\|\left(\mathbb{P}_n-P_0\right){\boldsymbol{f}}\right\|>\epsilon\right\}\to0,
\end{align*}
where the convergence to 0 in the preceding display is due to (\ref{GK}).

In addition, Lemma~\ref{lemma1} implies that $\Pr\left\{\left\|\hat{\boldsymbol{\beta}}_n-\boldsymbol{\beta}_0\right\|<\epsilon_0,
b_1<\hat{\lambda}_{n0}<b_2,\hat{\Lambda}_{n0}(\tau_0)<(2M/3)\tau_0\right\}\to1$. Hence, it can be shown that
\begin{equation*}
\left(\mathbb{P}_n-P_0\right)\left\{\left.\frac{\frac{\partial^2 L\left(\boldsymbol{\omega}_{\boldsymbol{t}}\left(\boldsymbol{\beta}_0,\hat{\Lambda}_{n0},\boldsymbol{h}_n\right);\right)}{\partial {\boldsymbol{t}}\partial {\boldsymbol{t}}^\top}}
{L\left(\boldsymbol{\omega}_{\boldsymbol{t}}\left(\boldsymbol{\beta}_0,\hat{\Lambda}_{n0},\boldsymbol{h}_n\right);\right)}\right\vert_{{\boldsymbol{t}}=\bar{\boldsymbol{\beta}}_n}\right\}\to_P{\boldsymbol{0}}.
\end{equation*}

In addition, since $d\left\{\left(\hat{\boldsymbol{\beta}}_n,\hat{\Lambda}_{n0}\right),\left(\boldsymbol{\beta}_0,\Lambda_0\right)\right\}\to_P0$, it is true that
\begin{align*}
P_0\left\{\left.\frac{\frac{\partial^2 L\left(\boldsymbol{\omega}_{\boldsymbol{t}}\left(\boldsymbol{\beta}_0,\hat{\Lambda}_{n0},\boldsymbol{h}_n\right);\right)}{\partial {\boldsymbol{t}}\partial {\boldsymbol{t}}^\top}}
{L\left(\boldsymbol{\omega}_{\boldsymbol{t}}\left(\boldsymbol{\beta}_0,\hat{\Lambda}_{n0},\boldsymbol{h}_n\right);\right)}\right\vert_{{\boldsymbol{t}}=\bar{\boldsymbol{\beta}}_n}
-\left.\frac{\frac{\partial^2 L\left(\boldsymbol{\omega}_{\boldsymbol{t}}\left(\boldsymbol{\beta}_0,\Lambda_0,\boldsymbol{h}_n\right);\right)}{\partial {\boldsymbol{t}}\partial {\boldsymbol{t}}^\top}}
{L\left(\boldsymbol{\omega}_{\boldsymbol{t}}\left(\boldsymbol{\beta}_0,\Lambda_0,\boldsymbol{h}_n\right);\right)}\right\vert_{{\boldsymbol{t}}=\boldsymbol{\beta}_0}
\right\}\to_P{\boldsymbol{0}},
\end{align*}
where
\begin{align*}
P_0\left\{\left.\frac{\frac{\partial^2 L\left(\boldsymbol{\omega}_{\boldsymbol{t}}\left(\boldsymbol{\beta}_0,\Lambda_0,\boldsymbol{h}_n\right);\right)}{\partial {\boldsymbol{t}}\partial {\boldsymbol{t}}^\top}}
{L\left(\boldsymbol{\omega}_{\boldsymbol{t}}\left(\boldsymbol{\beta}_0,\Lambda_0,\boldsymbol{h}_n\right);\right)}\right\vert_{{\boldsymbol{t}}=\boldsymbol{\beta}_0}\right\}
&=\int \left\{\left.\frac{\partial^2 L\left(\boldsymbol{\omega}_{\boldsymbol{t}}\left(\boldsymbol{\beta}_0,\Lambda_0,\boldsymbol{h}_n\right);\right)}{\partial {\boldsymbol{t}}\partial {\boldsymbol{t}}^\top}\right\vert_{{\boldsymbol{t}}=\boldsymbol{\beta}_0}\right\}\\
&=\left.\frac{\partial^2 \int p_{{\boldsymbol{t}},\Lambda_0+(\boldsymbol{\beta}_0-{\boldsymbol{t}})^\top\boldsymbol{h}_n }}{\partial {\boldsymbol{t}}\partial {\boldsymbol{t}}^\top} \right\vert_{{\boldsymbol{t}}=\boldsymbol{\beta}_0}={\boldsymbol{0}},
\end{align*}
by $\int p_{{\boldsymbol{t}},\Lambda_0+(\boldsymbol{\beta}_0-{\boldsymbol{t}})^\top\boldsymbol{h}_n }\equiv 1$ when ${\boldsymbol{t}}$ is in a sufficiently small neighborhood of  $\boldsymbol{\beta}_0$.
Hence, we have
\begin{align*}
\mathbb{P}_n\left\{\left.\frac{\frac{\partial^2 L\left(\boldsymbol{\omega}_{\boldsymbol{t}}\left(\boldsymbol{\beta}_0,\hat{\Lambda}_{n0},\boldsymbol{h}_n\right);\right)}{\partial {\boldsymbol{t}}\partial {\boldsymbol{t}}^\top}}
{L\left(\boldsymbol{\omega}_{\boldsymbol{t}}\left(\boldsymbol{\beta}_0,\hat{\Lambda}_{n0},\boldsymbol{h}_n\right);\right)}\right\vert_{{\boldsymbol{t}}=\bar{\boldsymbol{\beta}}_n}\right\}\to_P{\boldsymbol{0}}.
\end{align*}
Then by (\ref{split}), it is true that
\begin{equation}\label{secondlast}
-\mathbb{P}_n\left[\left.\frac{\partial^2 l\left(\boldsymbol{\omega}_{\boldsymbol{t}}\left(\boldsymbol{\beta}_0,\hat{\Lambda}_{n0},\boldsymbol{h}_n\right);\right)}{\partial {\boldsymbol{t}}\partial {\boldsymbol{t}}^\top}\right\vert_{{\boldsymbol{t}}=\bar{\boldsymbol{\beta}}_n}
+\left.\left\{\frac{\partial l\left(\boldsymbol{\omega}_{\boldsymbol{t}}\left(\boldsymbol{\beta}_0,\hat{\Lambda}_{n0},\boldsymbol{h}_n\right);\right)}{\partial {\boldsymbol{t}}}\right\}^{\otimes2}\right\vert_{{\boldsymbol{t}}=\bar{\boldsymbol{\beta}}_n}\right]\to_P{\boldsymbol{0}}.
\end{equation}

Next, by Example 19.20 of~\citet{vander:98} a Donsker class multiplied by  another Donsker class is a Glivenko-Cantelli class if the two Donsker classes are both uniformly bounded. Hence, we have
\begin{align*}
\left(\mathbb{P}_n-P_0\right)\left[\left.\left\{\frac{\partial l\left(\boldsymbol{\omega}_{\boldsymbol{t}}\left(\boldsymbol{\beta}_0,\hat{\Lambda}_{n0},\boldsymbol{h}_n\right);\right)}{\partial {\boldsymbol{t}}}\right\}^{\otimes2}\right\vert_{{\boldsymbol{t}}=\bar{\boldsymbol{\beta}}_n}\right]\to_P {\boldsymbol{0}}.
\end{align*}
Besides, since $d\left\{\left(\hat{\boldsymbol{\beta}}_n,\hat{\Lambda}_{n0}\right),\left(\boldsymbol{\beta}_0,\Lambda_0\right)\right\}\to_P0$ and $\left\|\boldsymbol{h}_n-\boldsymbol{h}_\ast\right\|_\infty\to0$, it is true that
\begin{align*}
P_0\left[\left.\left\{\frac{\partial l\left(\boldsymbol{\omega}_{\boldsymbol{t}}\left(\boldsymbol{\beta}_0,\hat{\Lambda}_{n0},\boldsymbol{h}_n\right);\right)}{\partial {\boldsymbol{t}}}\right\}^{\otimes2}\right\vert_{{\boldsymbol{t}}=\bar{\boldsymbol{\beta}}_n}
-\tilde{l}(\boldsymbol{\beta}_0,\Lambda_0;)\tilde{l}(\boldsymbol{\beta}_0,\Lambda_0;)^\top
\right]\to_P {\boldsymbol{0}}.
\end{align*}
The previous two displays result in
\begin{equation}\label{last}
\mathbb{P}_n\left.\left\{\frac{\partial l\left(\boldsymbol{\omega}_{\boldsymbol{t}}\left(\boldsymbol{\beta}_0,\hat{\Lambda}_{n0},\boldsymbol{h}_n\right);\right)}{\partial {\boldsymbol{t}}}\right\}^{\otimes2}\right\vert_{{\boldsymbol{t}}=\bar{\boldsymbol{\beta}}_n}
\to_P P_0\tilde{l}(\boldsymbol{\beta}_0,\Lambda_0;)\tilde{l}(\boldsymbol{\beta}_0,\Lambda_0;)^\top
\end{equation}
Adding (\ref{secondlast}) by (\ref{last}), we obtain (\ref{quadraticterm}).

The asymptotic normality (\ref{normality}), together with (\ref{linearterm}) and (\ref{quadraticterm}) implies that $f_{ln}$ converge in distribution to the $\chi^2$-distribution with $K$ degrees of freedom.

Using the same approach for showing (\ref{quadraticterm}), we can prove that
\begin{equation}\label{quadraticterm2}
-\mathbb{P}_n\left\{\left.\frac{\partial^2 l\left(\boldsymbol{\omega}_{\boldsymbol{t}}\left(\hat{\boldsymbol{\beta}}_n,\hat{\Lambda}_{n},\boldsymbol{h}_n\right);\right)}{\partial {\boldsymbol{t}}\partial {\boldsymbol{t}}^\top}\right\vert_{{\boldsymbol{t}}=\tilde{\boldsymbol{\beta}}_n}\right\}\to_P P_0\tilde{l}(\boldsymbol{\beta}_0,\Lambda_0;)\tilde{l}(\boldsymbol{\beta}_0,\Lambda_0;)^\top.
\end{equation}
Then, the asymptotic normality~(\ref{normality}) and the preceding display (\ref{quadraticterm2}) result in that $f_{rn}$ also converge in distribution to the $\chi^2$-distribution with $K$ degrees of freedom.  This completes the proof.~~~$\square$



\end{document}